\documentclass[a4paper,USenglish]{lipics-v2016}
 
\usepackage{microtype}


\bibliographystyle{plainurl}


\usepackage{xifthen} 
\usepackage{xspace}


\newtheorem{claim}[theorem]{Claim}

\newcommand\p{\mbox{\bf P}\xspace}
\newcommand\np{\mbox{\bf NP}\xspace}

\newcommand\dtime{\mbox{\bf DTIME}}

\newcommand{\eqdef}{\mathrel{\mathop:}=}
\newcommand{\OPT}{\ensuremath{\mathrm{OPT}}\xspace}
\newcommand{\opt}{\OPT\xspace}
\newcommand{\poly}{\ensuremath{\mathrm{poly}}\xspace}

\newcommand{\dcs}{\textsc{DCS}\xspace}
\newcommand{\dcsf}{\textsc{Densest Common Subgraph}\xspace}
\newcommand{\ma}{\textsc{DCS-MA}\xspace}
\newcommand{\am}{\textsc{DCS-AM}\xspace}
\renewcommand{\aa}{\textsc{DCS-AA}\xspace}
\newcommand{\mm}{\textsc{DCS-MM}\xspace}
\newcommand{\kma}{\textsc{D\ensuremath{k}CS-MA}\xspace}
\newcommand{\minrep}[1][]{\ifthenelse{\isempty{#1}}{\textsc{MinRep}}{\ensuremath{#1}-\textsc{MinRep}}\xspace}
\newcommand{\labelcover}{\textsc{Label Cover}\xspace}
\newcommand{\score}{\ensuremath{\mathrm{score}}\xspace}
\newcommand{\dks}{\textsc{D}\ensuremath{k}\textsc{S}\xspace}
\newcommand{\dksf}{\textsc{Densest }\ensuremath{k}\textsc{-Subgraph}\xspace}
\newcommand{\dsf}{\textsc{Densest Subgraph}\xspace}
\newcommand{\mis}{\textsc{MIS}\xspace}
\newcommand{\misf}{\textsc{Maximum Independent Set}\xspace}
\newcommand{\mcss}{\textsc{MCSS}\xspace}
\newcommand{\mcssf}{\textsc{Minimum Common Spanning Subgraph}\xspace}
\newcommand{\setcover}{\textsc{Set Cover}\xspace}
\newcommand{\ekvc}{\textsc{E}\ensuremath{k}\textsc{VC}\xspace}
\newcommand{\ekvcf}{\textsc{E}\ensuremath{k}\textsc{-Vertex Cover}\xspace}

\title{On Finding Dense Common Subgraphs}
\titlerunning{On Finding Dense Common Subgraphs} 

\author[1]{Moses Charikar}
\author[2]{Yonatan Naamad}
\author[3]{Jimmy Wu}
\affil[1]{Computer Science Department, Stanford University, Stanford, USA\\
  \texttt{moses@cs.stanford.edu}}
\affil[2]{Amazon.com, Inc., Palo Alto, USA.\thanks{This work was done while the author was at the Department of Computer Science, Princeton University.}\\
  \texttt{ynaamad@amazon.com}}
\affil[3]{Computer Science Department, Stanford University, Stanford, USA\\
  \texttt{jimmyjwu@cs.stanford.edu}}
\authorrunning{M. Charikar, Y. Naamad, and J. Wu} 

\Copyright{Moses Charikar, Yonatan Naamad, and Jimmy Wu}

\subjclass{F.2 Analysis of Algorithms and Problem Complexity}
\keywords{densest subgraph, approximation algorithms, hardness of approximation, graph sequences, integrality gap}

\begin{document}

\maketitle

\begin{abstract}
We study the recently introduced problem of finding dense common subgraphs:
Given a sequence of graphs that share the same vertex set, the goal is to find a subset of vertices $S$ that maximizes some aggregate measure of the density of the subgraphs induced by $S$ in each of the given graphs.
Different choices for the aggregation function give rise to variants of the problem that were studied recently.
We settle many of the questions left open by previous works, showing \np-hardness, hardness of approximation, non-trivial approximation algorithms, and an integrality gap for a natural relaxation.
\end{abstract}

\section{Introduction}
\label{sec:intro}

We study the recently introduced problem of finding dense common subgraphs ({\bf DCS}):
Given a sequence of $T$ graphs (sometimes called snapshots or frames) that share the same vertex set $V$, the goal is to find a subset of vertices $S \subseteq V$ that maximizes some aggregate measure of the density of the subgraphs induced by $S$ in each of the given graphs.
Different choices for the aggregation function give rise to different problem variants; two notable ones are (1) {\bf DCS-MA}, where the goal is to maximize the {\bf M}inimum (over the frames) of the {\bf A}verage degree in the induced subgraph, and (2) {\bf DCS-AM}, where the goal is to maximize the {\bf A}verage (over the frames) of the {\bf M}inimum degree in the induced subgraph.

Note that the problem for a single frame ($T=1$) is essentially the \dsf problem:
Given a graph, find an induced subgraph that maximizes the ratio of edges to vertices.
This is a classical optimization problem with well known polynomial time flow-based and LP-based algorithms~\cite{charikar2000greedy, goldberg1984finding}.
Thus the problems we study are natural generalizations of this maximum density subgraph objective to sequences of graphs.

The \dcs problem was recently introduced by Jethava and Beerenwinkel~\cite{jethava2015finding}; specifically, they study the \ma variant. For this, they proposed a linear program (a generalization of the LP for \dsf) and a greedy algorithm.
They showed, numerically, that the LP is not optimal, and also showed the the greedy algorithm is not optimal, but gave no approximation guarantees. They conjecture that the \ma problem is \np-hard.
Later followup work by Andersson et al.~\cite{anderssonfinding} ran experiments with the greedy algorithm and described a Lagrangian relaxation of the LP that could be solved more efficiently.
In very recent work of Semertzidis et al.~\cite{semertzidis2016best}, the authors explored four different variants of the \dcs problem corresponding to different choices of the aggregation function over the frames.
For the {\bf DCS-MM} (Min Min) and {\bf DCS-AA} (Average Average) problems, they gave simple exact algorithms.
They also proposed algorithms for \ma and \am, but only prove lower bounds on their approximation ratio.
Around the same time, Galimberti et al.~\cite{galimberti2017core} gave an exponential-time 2-approximation algorithm for \ma.

\dcs-like problems have appeared in various other research communities. In network biology, Hu et al.~\cite{hu2005mining} studied a problem with the same input as \dcs (that is, a graph sequence) but with a different objective from any of the four above, in order to understand the function of gene clusters. In machine learning, Jethava et al.~\cite{jethava2013lovasz} showed connections between support vector machines (SVMs) and a problem which is nearly identical to \ma.

In this work, we initiate a systematic theoretical study of the \ma and \am problems.
Prior to this, there were no nontrivial hardness results or efficient approximation algorithms for either of these problems. Moreover, we believe that this perspective, in which a classic combinatorial optimization problem is given a temporal or time-like dimension, deserves broader theoretical exploration. Such problems are not only interesting in their own right, but also arise naturally in social network analysis (as \dcs does), computational biology, and other important application domains. As an example of another classically tractable problem that becomes more interesting in the multi-frame case, we study in Appendix \ref{sec:mcss} the approximability of a natural generalization of the \textsc{Minimum Spanning Tree} problem. We anticipate that it will be interesting to explore temporal generalizations of other classic problems.

\subsection{Our Results}

For \ma, we give approximation algorithms with ratio $O(\sqrt{n \log T})$ (where $T$ is the number of graphs in the sequence) and $O(n^{2/3})$ irrespective of $T$. The first bound is better when $T = 2^{O(n^{1/6})}$, and in particular when $T$ is polynomial in $n$.

Additionally, we show an integrality gap of $\Omega(n/\log n)$ for a natural linear programming relaxation, introduced by \cite{jethava2015finding} and \cite{anderssonfinding}. This formalizes and significantly strengthens an experimental observation made by those authors that the LP is inexact.

On the complexity side, we prove that \ma is at least as hard to approximate as \minrep, a well-studied minimization version of \labelcover, and therefore cannot be approximated to within a factor of $2^{\log^{1-\epsilon} n}$ unless $\np \subseteq \dtime\left(n^{\mathrm{polylog } n}\right)$. This resolves a question left open by \cite{jethava2015finding} and \cite{semertzidis2016best}. Furthermore we show that, assuming a recent popular conjecture concerning the hardness of planted instances of the \dksf problem, \ma cannot be approximated to within a factor of $n^{1/4-\epsilon}$, and that even for $T=2$ frames, it cannot be approximated to within $n^{1/8-\epsilon}$. Finally in Appendix \ref{sec:kma}, we show that $f(n)$-hardness for (worst-case) \dksf implies $O(\sqrt{f(n)})$-hardness for D$k$CS-MA, a parameterized variant of \ma studied by \cite{semertzidis2016best}.

For the \am problem, we prove \np-hardness of approximation to within a factor of $n^{1-\epsilon}$, via reduction from \misf. This essentially matches the trivial upper bound of $n$.

Despite this hardness, we show that \am becomes tractable for small values of $T$. In particular, it can be solved exactly in time $n^T \cdot \text{poly}(n,T)$, and even has a fixed-parameter FPTAS: for every $\epsilon>0$, it can be $(1+\epsilon)$-approximated in time $f(T) \cdot \text{poly}(n, \epsilon^{-1})$.

\section{Preliminaries}
\label{sec:preliminaries}

In the \dcsf (\dcs) problem, we are given a sequence of graphs $\mathcal{G} = (G_1,G_2,\ldots,G_T)$ on the same vertex set $V$, and we must find a subset $S \subseteq V$ that maximizes the \emph{aggregate density} of the subgraphs induced by $S$. Different definitions of aggregate density give rise to different variants of \dcs:

\begin{itemize}
    \item (\mm) $\min_{i \in [T]} \text{min-deg}(G_i[S])$
    
    Here, $\text{min-deg}(G_i[S])$ is the minimum induced degree, i.e. $\min_{v \in S} \text{deg}_{G_i[S]}(v)$.
    
    \item (\ma) $\min_{i \in [T]} \left(\sum_{v \in S} \text{deg}_{G_i[S]}(v)\right)/{|S|}$
    
    This definition emphasizes the induced degrees of the vertices, but note that this is equivalent to $\min_{i \in [T]} |E(G_i[S])|/{|S|}$.
    
    \item (\am) $\sum_{i \in [T]} \text{min-deg}(G_i[S])$
    
    \item (\aa) $\sum_{v \in S} \left(\sum_{v \in S} \text{deg}_{G_i[S]}(v)\right)/{|S|}$
\end{itemize}

All four variants were studied by Semertzidis et al. \cite{semertzidis2016best}, who show that \mm can be solved by a simple greedy procedure and \aa easily reduces to the classic \dsf problem. They also present heuristics for \ma and \am, but do not provide approximation upper bounds; and they conjecture (but do not prove) that both problems are \np-hard. \ma was also studied by \cite{jethava2015finding} and \cite{anderssonfinding}, who experiment with linear programming relaxations and other heuristics.

Some of our hardness results are obtained by reducing from \minrep, which we now review.

\begin{definition}[\minrep]
We are given a bipartite graph $G = (A,B,E)$ and a partition of both sides into $A = A_1 \cup \cdots \cup A_k$ and $B = B_1 \cup \cdots \cup B_k$; each part $A_i$ or $B_j$ is called a \emph{supervertex}. We say there is a \emph{superedge} $(A_i,B_j)$ iff there are any edges in $A_i \times B_j$, and that a pair of vertices $a \in A_i, b \in B_j$ \emph{cover} $(A_i,B_j)$ iff $(a,b) \in E$. The goal is then to pick sets of vertices $A' \subseteq A, B' \subseteq B$ of minimum total size $|A'|+|B'|$ such that all superedges are covered by some pair of vertices in $A' \times B'$.
\end{definition}

As shown in \cite{kortsarz2001hardness}, \minrep has the following gap hardness:

\begin{theorem}[from \cite{kortsarz2001hardness}]
For every constant $\epsilon > 0$, the following promise problem cannot be solved in polynomial time unless $\np \subseteq \dtime\left(n^{\mathrm{polylog } n}\right)$: Given a \minrep instance $G = (A=\bigcup_{i=1}^k A_i, B=\bigcup_{i=1}^k B_i, E)$, distinguish between the following cases:
\begin{itemize}
    \item (YES instance) There exists a labeling $A' \subseteq A, B' \subseteq B$ of size $|A'|+|B'| = 2k$.
    \item (NO instance) Every labeling has size at least $2k \cdot 2^{\log^{1-\epsilon} n}$.
\end{itemize}
\end{theorem}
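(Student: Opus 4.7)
The plan is to establish this gap hardness by reducing from a suitably amplified \labelcover instance, following the template of Kortsarz.

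First I would invoke the PCP theorem together with Raz's parallel repetition theorem: for any constant $\eta < 1$ and any parameter $r$, it is \np-hard to distinguish bipartite \labelcover instances $(U,V,E)$ that are fully satisfiable from those in which no labeling satisfies more than an $\eta^r$ fraction of the projection constraints, with the reduction producing instances of size roughly $|V|^r$ over alphabet of size roughly $|\Sigma|^r$. Taking $r = \log^{1-\epsilon/2}$ of the original size puts the reduction in quasi-polynomial time, which is exactly the regime consistent with the conclusion $\np \subseteq \dtime(n^{\mathrm{polylog}\,n})$.

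Second I would reduce \labelcover to \minrep in the natural way: for each $u \in U$ let the supervertex $A_u$ be a copy of $\Sigma_U$, and likewise $B_v$ for $v \in V$. Place an edge between $a \in A_u$ and $b \in B_v$ iff $(u,v) \in E$ and the pair of labels $(a,b)$ satisfies the projection constraint on that edge; the superedges are then precisely the edges of the \labelcover instance. The YES direction is immediate: a satisfying labeling gives a size-$2k$ \minrep solution by picking the single label chosen at each supervertex. For the NO direction, suppose some cover $(A',B')$ has size at most $2k \cdot s$ with $s = 2^{\log^{1-\epsilon} n}$. Define a random labeling by choosing, for each $u$, an element of $A' \cap A_u$ uniformly at random (with an arbitrary fallback if empty), and similarly on the $B$ side. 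Each covered superedge $(A_u, B_v)$ has at least one witness pair in $A' \times B'$, so it is satisfied with probability at least $1/(|A' \cap A_u| \cdot |B' \cap B_v|)$. An averaging argument combined with Cauchy--Schwarz, using $\sum_u |A' \cap A_u| \le ks$ and the analogous bound on the $B$ side, shows that the expected fraction of satisfied \labelcover edges is $\Omega(1/s^2)$. Choosing $r$ so that $\eta^r < 1/s^2$ then contradicts the NO side of the \labelcover hardness.

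The main delicate point is calibrating $r$ so that the gap $2^{\log^{1-\epsilon} n}$ is stated with respect to the final \minrep input size $n$ rather than the pre-amplification \labelcover size. Since the alphabet and vertex set each blow up by factors of roughly $|\Sigma|^r$ and $|V|^r$, one has $\log n = \Theta(r \cdot \log(\text{original size}))$, so $r$ must be chosen as a suitable polylogarithmic function of the original instance to make the two exponents line up and to keep the reduction within $n^{\mathrm{polylog}\,n}$ time. This bookkeeping, together with handling the fallback rule when $A' \cap A_u$ is empty (which can only occur for an $O(s)$-fraction of supervertices and thus affects the bound by a constant factor), is the only non-routine step.
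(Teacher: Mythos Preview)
The paper does not prove this theorem at all: it is stated with the attribution ``from \cite{kortsarz2001hardness}'' and used as a black box, with no accompanying argument. So there is no ``paper's own proof'' to compare against; you have supplied strictly more than the paper does.

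That said, your sketch is essentially the standard derivation of the Kortsarz hardness: amplify \labelcover via parallel repetition with a polylogarithmic number of rounds (hence the quasi-polynomial time hypothesis), pass to \minrep by the obvious supervertex construction, and in the NO direction convert a small \minrep cover into a \labelcover labeling by random selection within each supervertex. Two small points are worth tightening. First, the fallback case is actually vacuous: since $A',B'$ must cover every superedge, any supervertex incident to a superedge has nonempty intersection with the cover, and isolated supervertices can be discarded without loss. Second, the lower bound $\Omega(1/s^2)$ on the expected satisfied fraction does not follow from Cauchy--Schwarz alone on the sums $\sum_u |A'\cap A_u|\le ks$; the clean argument is Markov's inequality (at most half the supervertices on each side carry more than $4s$ labels), combined with the regularity of the underlying constraint graph to ensure a constant fraction of superedges lie between ``light'' supervertices, each then satisfied with probability at least $1/(16s^2)$. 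With those fixes the bookkeeping on $r$ versus the final instance size $n$ goes through as you describe.
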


\section{Algorithms and Hardness for \ma}
\label{sec:ma}

\subsection{Approximating \ma}
\label{sec:approxma}

In this section, we first present a $O(\sqrt{n \log T})$-approximation algorithm for \ma. Notably, this simplifies to  $O(\sqrt{n \log n})$ when $T = \poly(n)$. We then show how to augment the algorithm so that the approximation ratio never surpasses $O(n^{2/3})$ even when $T$ is allowed to be super-polynomial in $n$.

\begin{theorem}
There exists an $O(\sqrt{n \log T})$-approximation algorithm for \ma.
\label{thm:maalg1}
\end{theorem}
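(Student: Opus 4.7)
\medskip
\noindent\textbf{Proof plan.} Let $S^\star$ be an optimum solution with $k^\star := |S^\star|$ and value $\rho^\star = \min_i |E(G_i[S^\star])|/k^\star$; note that $k^\star \ge 2\rho^\star$ since the density of any graph on $k$ vertices is at most $(k-1)/2$. The plan is to have the algorithm enumerate several families of candidate subsets and output the one of largest min-density across the $T$ graphs. Natural candidates include: (a) each pair $\{u,v\}$ that is a common edge in all $T$ graphs (yielding density $1/2$); (b) the densest subgraph $D_i$ of each individual $G_i$, computable in polynomial time by standard methods~\cite{goldberg1984finding, charikar2000greedy}; and (c) subsets produced by a multiplicative-weights procedure which, over $O(\log T)$ rounds, maintains weights $w_i$ on the $T$ graphs, solves a weighted densest subgraph instance on $\sum_i w_i G_i$ in each round to produce a candidate $S^{(t)}$, and then up-weights the graphs in which $S^{(t)}$ has low density.

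\medskip
\noindent\textbf{Analysis.} I would then split into two regimes based on $\rho^\star$. When $\rho^\star = O(\sqrt{n \log T})$, any set of density $\Omega(1)$ already achieves the claimed approximation ratio; this case is handled by candidate (a) when a common edge exists, and by enumerating constant-sized subsets otherwise. When $\rho^\star = \Omega(\sqrt{n \log T})$, and hence $k^\star$ is also large (at least $2\rho^\star$), we argue that one of the candidates in (b) or (c) achieves density $\Omega(\rho^\star/\sqrt{n \log T})$ in every graph. Here the $\sqrt{\log T}$ factor arises naturally from the $O(\log T)$-round multiplicative-weights guarantee, while the $\sqrt{n}$ factor is the target loss when converting the average-density guarantee that MW provides into a min-density guarantee on a single subset (for instance, by restricting a candidate of size $m$ to a random subset of size $\sqrt{n}$, at the cost of a $\sqrt{n}$ factor in density).

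\medskip
\noindent\textbf{Main obstacle.} The hardest step is the large-$\rho^\star$ case. Since the natural LP relaxation has an $\Omega(n/\log n)$ integrality gap (as shown elsewhere in this paper), an $O(\sqrt{n \log T})$-approximation cannot come from LP rounding directly, so the argument must be combinatorial. Concretely, one must show that among the $O(\log T)$ candidates $S^{(t)}$ produced by MW (each of which is exactly optimal for its weighted densest subgraph instance), at least one has density within a $\sqrt{n}$ factor of $\rho^\star$ in \emph{every} graph $G_i$, not just in a weighted average. I expect this to combine a size-density trade-off for densest subgraphs with a careful averaging argument that trades the average-density guarantee MW provides for a per-graph min-density guarantee, potentially by passing to a random subset of appropriate size in order to eliminate graphs on which $S^{(t)}$ did especially poorly.
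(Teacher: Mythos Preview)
Your proposal has genuine gaps in both regimes, and the paper's actual argument is much simpler than what you attempt.

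In the small-$\rho^\star$ regime you assert that a set of density $\Omega(1)$ can be found via a common edge or by ``enumerating constant-sized subsets.'' But if no edge is common to all $T$ frames, there is no reason any constant-sized subset has positive min-density: getting at least one induced edge in every frame is itself a set-cover-type problem, and no $O(1)$-sized witness need exist. (Take $T$ edge-disjoint perfect matchings on $V$; every pair induces an edge in at most one frame, and any set covering all frames has size $\Omega(T)$.) So this case is not handled.

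In the large-$\rho^\star$ regime your multiplicative-weights sketch never converts its average-density guarantee into a per-frame min-density guarantee. The mechanism you suggest---passing to a random subset of size $\sqrt{n}$---does not help: subsampling scales the density in every $G_i$ by the same expected factor, so a frame on which $S^{(t)}$ is sparse stays sparse after subsampling. You flag this as the ``hardest step'' but do not carry it out, and as stated the approach does not go through.

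The paper avoids both difficulties with a different and much more elementary trade-off. It compares just two candidates. The first is $S=V$: since $V \supseteq S^\star$, every frame has at least as many edges as $G_i[S^\star]$, so $\score(V) \ge \rho^\star \cdot k^\star / n$. The second is built greedily: starting from $S'=\emptyset$, repeatedly add the pair $\{u,v\}$ inducing an edge in the most currently-uncovered frames, until every frame is covered. A standard set-cover analysis (each step covers at least a $\rho^\star / k^\star$ fraction of the remaining frames) shows $|S'| \le 2k^\star \ln T / \rho^\star$, and since $S'$ has at least one edge per frame, $\score(S') \ge \rho^\star / (2k^\star \ln T)$. Taking the better of the two and using $\max(\alpha,\beta)\ge\sqrt{\alpha\beta}$ eliminates $k^\star$ and gives $\rho^\star/\sqrt{2n\ln T}$. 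The candidate you are missing is simply $V$ itself; it supplies the $n/k^\star$ side of the trade-off with no work, and the greedy cover supplies the $k^\star \log T$ side.
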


The algorithm returns the better of two feasible solutions. The first feasible solution it considers is that containing all vertices, i.e. $S = V$. The second feasible solution is constructed greedily. For a set $S' \subset V$ of vertices, we say that a frame $G_i$ is covered if $E_i[S']$ contains at least one edge. Now initialize $S' \leftarrow \emptyset$. As long as some graphs remain uncovered, we add to $S'$ the two vertices $u, v \in V$ such that $S' \cup \{u, v\}$ covers as many graphs as possible (that is, $u$ and $v$ induce at least as many edges among the uncovered graphs as does any other pair). When all graphs are covered, $S = S'$ forms our second feasible solution.

Recall that the objective we seek to maximize is $\score(S) \eqdef \min_{i \in [T]} \frac{|E(G_i[S])|}{|S|}$. Let $k \eqdef |\opt|$; although this value is unknown to us, we can analyze, in terms of $k$, the quality of the two feasible solutions:

\begin{lemma}
If $S = V$, then $\score(S)$ is an $n/k$ approximation to $\score(\opt)$.
\label{lem:allv}
\end{lemma}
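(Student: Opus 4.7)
The plan is to prove the lemma by direct comparison of the two scores, exploiting monotonicity of the edge count under taking supergraphs.

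First I would write out both quantities explicitly: $\score(V) = \min_{i \in [T]} |E(G_i)|/n$ and $\score(\opt) = \min_{i \in [T]} |E(G_i[\opt])|/k$. Since $\opt \subseteq V$, for every frame index $i$ the induced subgraph $G_i[\opt]$ is a subgraph of $G_i$, so $|E(G_i)| \ge |E(G_i[\opt])|$. Dividing by $n$ and rewriting, this yields
\[
\frac{|E(G_i)|}{n} \;\ge\; \frac{|E(G_i[\opt])|}{n} \;=\; \frac{k}{n} \cdot \frac{|E(G_i[\opt])|}{k}
\]
for each $i$. Taking the minimum over $i \in [T]$ on both sides preserves the inequality and gives $\score(V) \ge (k/n) \cdot \score(\opt)$, which rearranges to $\score(\opt)/\score(V) \le n/k$, the claimed approximation ratio.

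There is no real obstacle here: the proof amounts to the trivial observation that including extra vertices never decreases the number of edges in any frame, combined with the fact that the denominator grows from $k$ to $n$. I would just be careful to state the conclusion in the form required for the outer argument, namely that $S = V$ is a feasible solution whose score differs from $\score(\opt)$ by at most a factor of $n/k$, so that Theorem~\ref{thm:maalg1} can later combine it with the bound coming from the greedy covering solution by balancing the two cases against each other in terms of the unknown parameter $k$.
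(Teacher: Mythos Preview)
Your proof is correct and follows essentially the same approach as the paper: both use the monotonicity $|E(G_i[V])| \ge |E(G_i[\opt])|$ for each frame, divide by $|V|=n$, and factor out $k/n$ before taking the minimum over $i$. The paper's version is simply the same chain of (in)equalities written on one line.
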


\begin{proof} 
\begin{equation*}
    \score(V)
    = \frac{\min_i |E_i[V]|}{|V|}
    \geq \frac{\min_i |E_i[\opt]|}{|V|}
    = \frac{\min_i |E_i[\opt]|}{|OPT|} \cdot \frac{|\opt|}{|V|}
    = \score(\opt) \cdot \tfrac{k}{n}.
\end{equation*}
\end{proof}

\begin{lemma}
If $S$ is constructed as in the above greedy algorithm, then $\score(S)$ is a $2 k \ln T$ approximation to $\score(\opt)$.
\label{lem:greedyma}
\end{lemma}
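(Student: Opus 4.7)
The plan is to bound $|S|$ by $O(k \log T / \score(\OPT))$, where $k = |\OPT|$. Because the greedy stops only when every frame has at least one induced edge, we immediately get $\min_i |E_i[S]| \geq 1$, so
\[
\score(S) \;\geq\; \frac{1}{|S|},
\]
and proving the lemma reduces to showing $|S| \leq 2 k \ln T / \score(\OPT)$ up to constants.

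The heart of the argument is a per-iteration progress bound: I will show that at the start of any iteration, if $U$ denotes the current set of uncovered frames, then the pair chosen by the greedy covers at least $\frac{2d|U|}{k}$ frames in $U$, where $d \eqdef \score(\OPT)$. The proof is by a standard averaging over $\OPT$. For each $i \in U$, the definition of $\score(\OPT)$ gives $|E(G_i[\OPT])| \geq dk$, so
\[
\sum_{i \in U} |E(G_i[\OPT])| \;=\; \sum_{\{u,v\} \subseteq \OPT} \bigl|\{\, i \in U : (u,v) \in E(G_i)\,\}\bigr| \;\geq\; dk|U|.
\]
Since $|\binom{\OPT}{2}| = \binom{k}{2} \leq k^2/2$, some pair $\{u,v\} \subseteq \OPT$ appears as an edge in at least $2d|U|/k$ frames of $U$, and adding this pair to the current solution would cover all of them. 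The greedy maximizes over \emph{all} pairs in $V$, so it does at least as well.

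Iterating the recurrence $|U_{t+1}| \leq |U_t|\bigl(1 - 2d/k\bigr)$ from $|U_0| \leq T$ gives $|U_t| \leq T e^{-2dt/k}$, which drops below $1$ (hence to $0$, by integrality) once $t > \tfrac{k \ln T}{2d}$. Thus the greedy halts after at most $\lceil k \ln T / (2d) \rceil + 1$ iterations, adding $2$ vertices per iteration, so $|S| \leq k \ln T / d + O(1)$. Combining with $\score(S) \geq 1/|S|$ yields the claimed $O(k \ln T)$ approximation (the factor of $2$ in the lemma statement absorbs lower-order terms, with the degenerate case $k \leq 1$ being trivial since then $\score(\OPT) = 0$). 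The only non-routine step is the double-counting in paragraph two; everything afterward is arithmetic on a geometric recurrence.
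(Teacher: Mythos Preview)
Your proof is correct and follows essentially the same approach as the paper: an averaging argument over the $\binom{k}{2}$ pairs in $\OPT$ to show each greedy step covers at least a $\Theta(\score(\OPT)/k)$ fraction of the remaining frames, followed by the standard geometric recurrence to bound the number of iterations. Your per-iteration bound of $2d/k$ is in fact a factor of two sharper than the paper's $d/k$ (the paper conflates the $|E|/|S|$ and average-degree definitions of $\score$ at one point), but this only tightens the constant and the overall argument is identical.
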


\begin{proof} 
Consider an intermediate state $i$ of our algorithm, with partial solution $S'_i$ covering all but $t_i$ of the frames. Let $\mathcal{T}_i$ be the set of $t_i$ subgraphs that \opt induces on each of the uncovered frames. Since the average degree in each subgraph in $\mathcal{T}_i$ is at least $\score(\opt)$, there are at least $\frac{k \cdot \score(\opt)}{2}$ induced edges in each frame of $\mathcal{T}_i$, and, summing over all these frames, $\frac{t_i \cdot k \cdot \score(\opt)}{2}$ edges total induced by $\opt$. As there are $\binom{k}{2}$ vertex pairs in $\opt$, at least one such pair induces an edge in at least $\frac{t_i k \cdot \score(\opt)}{2 \cdot \binom{k}{2}}$ of them. Thus, the next vertex pair chosen by the greedy algorithm covers at least a $\frac{k \cdot \score(\opt)}{2 \cdot \binom{k}{2}} \geq \frac{\score(\opt)}{k}$ fraction of the previously-uncovered frames. In general, the number of uncovered frames satisfies $t_i \leq \left(1 - \frac{\score(\opt)}{k}\right)t_{i-1}$ with $t_0 = T$. Therefore, $t_i \leq T \left(1 - \frac{\score(\opt)}{k} \right)^i$, and in particular $t_i < 1$ when $i \geq \frac{k \ln T}{\score(\opt)}$. Thus, the algorithm halts after at most $\frac{k \ln T}{\score(\opt)}$ iterations.

As each iteration adds at most two new vertices to $S'$, the ultimate size of $S'$ is at most $\frac{2k \ln T}{\score(\opt)}$. Combining this with the fact that the returned solution covers at least one vertex in each frame, we have

\[
    \score(S)
    = \frac{\min_i E_i[S]}{|S|}
    \geq \frac{1}{\frac{2k \ln T}{\score(\opt)}}
    = \frac{\score(\opt)}{2k \ln T}.
\]
\end{proof}

With these lemmas in place, Theorem \ref{thm:maalg1} is a simple corollary.

\begin{proof}[Proof of Theorem \ref{thm:maalg1}]
Let $\alpha$ be the score of $V$, and let $\beta$ be the score of the solution generated by the aforementioned greedy algorithm. By Lemmas \ref{lem:allv} and \ref{lem:greedyma}, $\alpha \geq \score(\OPT) \cdot \tfrac{k}{n}$ and $\beta \geq \tfrac{\score(\opt) }{2k \ln T}$. Thus, the better of the two solutions has score at least

\[
    \max(\alpha, \beta) = \sqrt{\max(\alpha, \beta)^2} \geq \sqrt{\alpha\cdot\beta} = \sqrt{\score(\opt) \cdot \frac{k}{n}\cdot\frac{\score(\opt) }{2k \ln T}} = \frac{\score(\opt) }{\sqrt{2n \ln T}}.
\]
\end{proof}

While the above algorithm provides an $O(\sqrt{n \log n})$ approximation when $T = \poly(n)$, it can be substantially worse when $T$ is allowed to be super-polynomial in $n$. In such cases, however, the size of the input must also be super-polynomial in $n$, which we can exploit to get reasonable approximations in terms of $n$. This idea underlies the following theorem:

\begin{theorem}
There exists an $n^{2/3}$-approximation algorithm for \ma.
\label{thm:maalg2}
\end{theorem}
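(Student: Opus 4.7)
The plan is to extend the algorithm of Theorem~\ref{thm:maalg1} with a third candidate solution that exploits the fact that, when $T$ is super-polynomial in $n$, so is the input size. The algorithm returns the best among three candidates: (a) $S = V$; (b) the greedy solution from Lemma~\ref{lem:greedyma}; and (c) an exhaustive enumeration over all vertex subsets $S \subseteq V$ of size at most $L$, for a suitably chosen threshold $L$.

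First I would set $L = \lfloor \log T / \log n \rfloor$, so that (c) considers at most $n^L \le T$ subsets and runs in time $\mathrm{poly}(n, T)$, which is polynomial in the input size (the input, specifying $T$ frames on $n$ vertices, has size at least $T$). Then I would analyze the approximation ratio by cases on $k = |\opt|$: when $k = \Omega(n^{1/3})$, candidate (a) gives ratio $n/k = O(n^{2/3})$ by Lemma~\ref{lem:allv}; when $k \le L$, candidate (c) returns $\opt$ exactly (ratio $1$); and for intermediate values of $k$, candidate (b) gives ratio $2k \ln T$ by Lemma~\ref{lem:greedyma}, which I would combine with the $n/k$ bound via the AM-GM argument from Theorem~\ref{thm:maalg1}.

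The hard part will be checking that these three candidates together cover all $(k, T)$ regimes with ratio $O(n^{2/3})$. When $\ln T = O(n^{1/3})$, Theorem~\ref{thm:maalg1} itself already gives $O(\sqrt{n \log T}) = O(n^{2/3})$, so (a) and (b) suffice. When $\ln T$ is much larger, $L$ grows in tandem, so (c) combined with (a) closes the gap by finding $\opt$ directly for small $k$ and falling back to $V$ for large $k$. The delicate regime is where $\ln T$ is moderately large but $L$ has not quite reached $\Theta(n^{1/3})$; handling this cleanly requires balancing (b)'s $k \ln T$ bound against (a)'s $n/k$ bound with the right choice of constants, and possibly tightening the threshold $L$ to ensure the intermediate range is bridged without a polylogarithmic loss.
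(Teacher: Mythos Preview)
Your approach is close in spirit to the paper's---you correctly identify $S=V$, the greedy cover, and exhaustive search over small subsets as useful building blocks---but there is a genuine gap: these three candidates alone cannot reach a clean $O(n^{2/3})$ ratio. The ``delicate regime'' you flag is in fact unfixable with only (a), (b), (c). Concretely, take $k \approx (n/\ln n)^{1/3}$ and $T \approx n^{k-1}$, so that $L = k-1 < k$ and (c) does not apply. Then (a) gives ratio $n/k \approx n^{2/3}(\ln n)^{1/3}$, while (b) gives ratio $2k\ln T \approx 2k^2\ln n \approx 2n^{2/3}(\ln n)^{1/3}$; their geometric mean is also $\Theta(n^{2/3}(\ln n)^{1/3})$. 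Tightening $L$ cannot help: the running-time constraint $n^{L}=\mathrm{poly}(n,T)$ forces $L=O(\log_n T)$, so in this regime you are stuck with $\ln T = \Theta(k\ln n)$ and the residual polylog loss is unavoidable.

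The paper closes this gap with one additional candidate that you are missing. It partitions $V$ into $r=2\lceil\ln T\rceil$ parts $S_1,\dots,S_r$ of nearly equal size and tries every union $\bigcup_{i\in I}S_i$ over $I\subseteq[r]$; this is $2^r=\mathrm{poly}(T)$ subsets, hence polynomial in the input. Taking $I$ to be the set of parts that meet $\opt$ yields a set containing all of $\opt$'s edges and at most $\tfrac{n}{2\ln T}\cdot|\opt|$ vertices, i.e.\ an $\tfrac{n}{2\ln T}$-approximation. The key feature---absent from all of your candidates---is that this bound \emph{improves} as $T$ grows, so its $\ln T$ in the numerator cancels the $\ln T$ in the denominator of the greedy bound. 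The three-way geometric mean of $\tfrac{2\ln T}{n}$, $\tfrac{1}{2k\ln T}$, and $\tfrac{k}{n}$ is then exactly $n^{-2/3}$, with no log left over. (The small-subset enumeration, your (c), is still used, but only to dispose of the case $k\le \log_n T$.)
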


\begin{proof}[Proof of Theorem \ref{thm:maalg2}]
The algorithm takes the best of the following:

\begin{enumerate}
    \item The output of the greedy algorithm underlying Lemma \ref{lem:greedyma}.
    \item The best subset of $V$ of size at most $\log_n T$.
    \item The best subset of $V$ generated by the following procedure: Partition $V$ into $r = 2 \lceil \ln T \rceil$ parts $S_1,\ldots,S_{r}$ whose sizes are as close to equal as possible. For each $I \subseteq [r]$, compute the score of the set $\bigcup_{i \in I} S_i$.
\end{enumerate}

As there are $O(n^{\log_n T}) = O(T)$ vertex subsets of size at most $\log_n T$, algorithm 2 runs in $\poly(n,T)$ time. Algorithm 3 also runs in $\poly(n,T)$ time, since we consider $2^{2 \lceil \ln T \rceil} = \poly(T)$ subsets.

When $k \leq \log_n T$, algorithm 2 returns an optimal solution to \ma. Hence it suffices to consider the case that $k \geq \log_n T$. However, in this case, the subset returned by algorithm 3 provides at worst an $\frac{n}{2\ln T}$-approximation to $\opt$. To see this, note that when $V$ is partitioned, the vertices of $\opt$ are split up into parts of size each at most $\tfrac{n}{2\ln T}$. Let $I^* \subseteq [r]$ be the minimal subset such that $\opt \subseteq \bigcup_{i \in I^*} S_i$. Clearly the algorithm considers this set, which induces all the edges induced by $\opt$, and contains at most $|S| \leq \frac{n}{2\ln T} \cdot |\opt|$ vertices.

Let $\hat{S}$ be the best of the solutions returned by these three algorithms. We conclude that

\begin{align*}
    \score(\hat{S}) &\geq \score(\opt) \cdot \max\left(\frac{2\ln T}{n}, \frac{1}{2k \ln T}, \frac{k}{n} \right) \\
    &\geq \score(\opt) \cdot\sqrt[3]{\frac{2\ln T}{n} \cdot \frac{1}{2k \ln T} \cdot \frac{k}{n}} \\
    &= \score(\opt) \cdot \sqrt[3]{\frac{1}{n^2}} \\
    &= \score(\opt) \cdot \frac{1}{n^{2/3}}.
\end{align*}
\end{proof}

\subsection{MinRep-Hardness of \ma}
\label{sec:hardnessma}

We now show that \ma is \minrep-hard to approximate.

\begin{theorem}
There is an approximation-preserving reduction (up to constant factors) from \minrep to \ma. In particular, \ma cannot be approximated to within a $O(2^{\log^{1-\epsilon} n})$ factor unless $\np \subseteq \dtime\left(n^{\mathrm{polylog } n}\right)$.
\label{thm:maminrep}
\end{theorem}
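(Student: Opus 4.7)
I would reduce \minrep to \ma as follows: given a \minrep instance $I = (A, B, E, \{A_i\}, \{B_j\})$, construct the \ma instance $J$ with vertex set $V = A \cup B$ and, for each superedge $s = (A_i, B_j)$, a single frame $G_s$ whose edges are the covering edges $E \cap (A_i \times B_j)$.

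For completeness, any \minrep cover $A' \cup B'$ of size $2k$ yields $S = A' \cup B'$ with $|E(G_s[S])| \geq 1$ in every frame, hence $\score(S) \geq 1/(2k)$.

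For soundness, suppose $\score(S) = \sigma > 0$ for some $S$. Then $|E(G_s[S])| \geq \sigma |S| \geq 1$ in every frame, so $S \cap (A \cup B)$ is a \minrep cover, giving $|S| \geq \OPT(I)$. To turn this into an approximation-preserving reduction at the level of $\sigma$ (as opposed to a mere YES/NO gap), I also want $|S| \leq O(1/\sigma)$, which follows from any uniform upper bound $|E(G_s[S])| \leq O(1)$: once $\sigma|S| \leq |E(G_s[S])| \leq O(1)$, rearranging yields $|S| \leq O(1/\sigma)$, and extracting $S \cap (A \cup B)$ gives a \minrep cover of size $O(1/\sigma)$. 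Combining with the completeness bound $\score \geq 1/(2k)$, an $\alpha$-approximation for \ma then produces an $O(\alpha)$-approximation for \minrep.

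The main obstacle is establishing this uniform per-frame edge bound without damaging the \minrep gap. Starting from the hard \minrep instances of \cite{kortsarz2001hardness}, one must verify that the covering multiplicity per superedge is small enough---either directly from the parameters of the underlying label cover, or after a small sparsification gadget that replaces each superedge by an $O(1)$-covering-edge variant while preserving the \minrep optimum up to constants. Once this is arranged, the $2^{\log^{1-\epsilon} n}$ gap of \minrep transfers (up to a possibly slightly smaller constant in the exponent) to \ma, yielding the stated hardness under the same complexity assumption $\np \not\subseteq \dtime\left(n^{\mathrm{polylog}\,n}\right)$.
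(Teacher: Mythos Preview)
Your reduction skeleton and completeness argument match the paper's. You also correctly identify the obstacle in the soundness direction: from a set $S$ with $\score(S)=\sigma>0$ you can conclude that $S$ is a \minrep cover, but you cannot bound $|S|$ by $O(1/\sigma)$ unless some frame has only $O(1)$ edges inside $S$.

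The gap is in how you propose to resolve this. Neither of your suggested routes is viable as stated. The hard \minrep instances arising from \labelcover have, in general, many covering edges per superedge (up to the supervertex size), so the bound does not hold ``directly from the parameters.'' And your sparsification idea---replacing each superedge by an $O(1)$-covering-edge variant---cannot preserve the \minrep gap: in a YES instance the perfect labeling uses one specific edge per superedge, and keeping only $O(1)$ edges per superedge (say, at random) will with high probability miss most of those witnesses, destroying completeness.

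The paper sidesteps this entirely with a one-line trick you are missing: add two fresh vertices $u,v$ and one extra frame whose \emph{only} edge is $(u,v)$. Any $S$ with positive score must contain both $u$ and $v$, and on this frame $|E[S]|=1$ exactly, forcing $\score(S)\le 1/|S|$. Hence $\score(\opt)=1/(|V^*|+2)$ on the nose, and any $\alpha$-approximate $S$ immediately has $|S|\le \alpha(|V^*|+2)$, giving an $O(\alpha)$-approximate \minrep cover. No structural assumption on the \minrep instance and no sparsification are needed.
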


\begin{proof} 
Consider a \minrep instance with supervertices $A_1, A_2, \ldots, A_n$  and $B_1, B_2, \ldots, B_n$, vertex set $\bar{V} = A \cup B$ (with $A = \bigcup_i A_i$ and $B = \bigcup_i B_i$), and edge set $\bar{E}$.  Call an optimal solution to this instance $V^*$. We identify superedges $S_1, S_2, \ldots$  as sets of their constituent edges (so each $S_i$ is a subset of $\bar{E}$). The vertex set in our construction will be $V = \bar{V} \cup \{u, v\}$, where $u$ and $v$ are not vertices in $\bar{V}$.

The frame $G_1$ has exactly one edge, between vertices $u$ and $v$. This forces $u$ and $v$ to each be in $\mathrm{\opt}$, and ensures that $\score(\mathrm{\opt})$ is exactly $\tfrac{1}{|\opt|}$ (all other graphs will contain at least one edge, so $\score(\mathrm{\opt})$ is nonzero). Additionally, it means that we never benefit from picking up more than one edge in any other frame in the sequence.

We now construct one graph for each superedge. For each superedge $S_i \subseteq \bar{E}$, $E_{i+1}$ is exactly $S_i$ (i.e. if $(a_i, b_j) \in S_i$, then $(a_i, b_j) \in E_{i+1}$).  Thus, to get a positive score on $E_{i+1}$, we must pick the endpoints of at least one edge in $S_i$.

The only way to get a positive score is to get a positive score on each frame $G_i$. Thus, for each superedge $S_i$, we must pick both endpoints of one of the edges in $S_i$. Therefore $|\opt| \geq |V^*| + 2$ (where the $+2$ comes from the requirement that we also pick $u$ and $v$). Conversely, picking the vertices in $V^*$ (as well as $u$ and $v$) gives us a positive score in each graph, and we have already established that $\opt$ is precisely the smallest such set. Thus, $|\opt| \leq |V^*| + 2$, meaning that $|\opt| = |V^*| + 2$, and thus $score(\opt) = \frac{1}{|V^*| + 2}$.

Now suppose that it is hard to distinguish between a \minrep instance with objective value $x$ and objective value $x \cdot f(n)$. This implies that it is hard to distinguish between \ma instances with value $\frac{1}{x+2}$ and those with value $\frac{1}{x\cdot f(n) + 2}$. The inapproximability ratio is thus $\frac{\frac{1}{x+2}}{\frac{1}{x\cdot f(n) + 2}} = \frac{x\cdot f(n) + 2}{x + 2} = O(f(n))$.
\end{proof}

\subsection{Planted Dense Subgraph Hardness for Two Frames}
\label{sec:dcs_ma_hard_dks}

Much like \textsc{Planted Clique}, planted instances of \dksf (\dks) are increasingly used to show conditional hardness of approximation for \np-hard problems~\cite{applebaum2010public,arora2010computational,awasthi2015label,charikar2016approximating}. Formally, this \textsc{Planted Dense Subgraph Conjecture} may be phrased as follows. Two graphs, $G_1$ and $G_2$, are independently sampled Erd\H{o}s-R\'{e}nyi random graphs of order $n$ and edge probability $n^{-1/2}$. It is easy to verify that with high probability every size-$\sqrt{n}$ induced subgraph of $G_1$ and $G_2$ has average degree $O(1)$. Subsequently, in $G_2$, a subset of size $\sqrt{n}$ is selected uniformly at random and is replaced with a $G(\sqrt{n}, n^{-1/4-\epsilon})$ Erd\H{o}s-R\'{e}nyi graph for some $\epsilon \in (0, 1/4)$. Thus, the average degree in this subgraph is $\Theta(n^{1/4-\epsilon})$. Finally, $G_3$ is set to equal either $G_1$ or $G_2$, as chosen by a fair coin toss. The \textsc{Planted Dense Subgraph} problem asks, given only access to $G_3$, to determine whether $G_3$ equals $G_1$ (i.e. is \texttt{PLANTED}) or equals $G_2$ (i.e. is \texttt{UNPLANTED}). The \textsc{Planted Dense Subgraph Conjecture} claims that no probabilistic polynomial time algorithm can solve this problem with probability appreciably better than making a random choice (i.e. with probability $1/2 + \epsilon'$).

We show that, assuming this conjecture, \ma has no $n^{1/8-\epsilon}$ approximation even when $T=2$. The intuition behind the construction is as follows. Consider an instance of the \textsc{Planted Dense Subgraph} problem. Because the conjecture (effectively) says that the dense planted component is hard to detect, one might naively imagine that this immediately implies hardness for the \dsf problem (with $T=1$). However, this reasoning is flawed, as the densest subgraph in such instances is (with extremely high probability) simply all of $G$. However, using a second graph $G'$, we can aim to restrict the set of good solutions to those of size at most some $O(k)$. In particular, by setting $G'$ to contain only some $k$-clique, we can ensure that, up to constant factors, the optimum solution contains at most $O(k)$ vertices. Additionally, good solutions to this new problem with $T=2$ directly correspond to good solutions for \dks on $G$. For an appropriate choice of $k$, $n^{1/8-\epsilon}$-hardness follows.

\begin{theorem}
Assuming the \textsc{Planted Dense Subgraph Conjecture}, for no $\epsilon > 0$ is there a probabilistic polynomial-time algorithm approximating \ma, even with only two frames, to within a factor of $O(n^{1/8-\epsilon})$.
\label{thm:dcs_ma_hard_dks18}
\end{theorem}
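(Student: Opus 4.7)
The plan is to reduce the \textsc{Planted Dense Subgraph} distinguishing problem to \ma with $T=2$, producing instances whose \ma optimum is $\Omega(n^{1/4-\epsilon})$ in the PLANTED case and at most $\tilde{O}(n^{1/8})$ in the UNPLANTED case. Given a PDS input graph $H$ on vertex set $U$ with $|U|=n$, I set $k \eqdef n^{3/8}$, introduce a disjoint fresh set $W$ of $k$ new vertices, and let $V \eqdef U \cup W$. The first frame $G_1$ consists of exactly the edges of $H$ on $U$; the second frame $G_2$ consists of exactly the edges of a $k$-clique on $W$. The role of $G_2$ is as a size-limiter: for any candidate $S$ with $|S \cap W| = k'$ and $|S \cap U| = m$, the $G_2$ score is $\binom{k'}{2}/(k'+m)$, so requiring the \ma score to be at least $\tau$ forces $k'+m \le k^2/(2\tau)$.

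For the PLANTED case I would exhibit the explicit solution $S \eqdef W \cup T^*$, where $T^*$ is the planted dense subset of $\sqrt{n}$ vertices with average degree $\Theta(n^{1/4-\epsilon})$. Since $k < \sqrt{n}$, $|S| = \Theta(\sqrt{n})$, so the $G_2$ score is $\Theta(k^2/\sqrt{n}) = \Theta(n^{1/4})$ and the $G_1$ score is $|E_1[T^*]|/|S| = \Theta(n^{3/4-\epsilon}/\sqrt{n}) = \Theta(n^{1/4-\epsilon})$. Their minimum gives $\score(S) = \Omega(n^{1/4-\epsilon})$, as desired.

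For the UNPLANTED case, the crucial input is a uniform density bound for $H \sim G(n, n^{-1/2})$: with high probability, every $T \subseteq U$ of size $t$ satisfies $|E_1[T]| \le O(t^2/\sqrt{n} + \log n)$. Given this, any $S$ attaining $\score(S) \ge \tau$ must satisfy simultaneously $m \ge \Omega(\tau\sqrt{n})$ (from the $G_1$ constraint combined with the density bound, in the regime where the $t^2/\sqrt{n}$ term dominates) and $m \le k^2/(2\tau)$ (from the $G_2$ size-limiter). Multiplying these gives $\tau^2 = O(k^2/\sqrt{n}) = O(n^{1/4})$, i.e.\ $\tau = O(n^{1/8})$. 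The complementary small-$m$ regime, in which the $\log n$ term dominates the density bound, yields $\tau = \tilde{O}(1)$ by a short direct computation that uses $|S| \ge 2\tau$ (itself forced by the $G_2$ constraint).

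The step I expect to be the main obstacle is the uniform density lemma for UNPLANTED, particularly for values of $t$ where $\binom{t}{2} p$ is $o(\log n)$. In that range, naive Chernoff tails multiplied through a $\binom{n}{t}$ union bound are too weak; one instead needs a direct first-moment argument of the form $\Pr[\exists\, T \text{ of size } t \text{ with } \ge a \text{ edges}] \le \binom{n}{t}\binom{\binom{t}{2}}{a} p^{a}$, together with a verification that this quantity is $o(1)$ when $a = \Theta(t^2/\sqrt{n} + \log n)$ uniformly over all relevant $t$. Once this lemma is in place, the $\Omega(n^{1/8-\epsilon})$ gap in the \ma optimum means that any polynomial-time $n^{1/8-\epsilon'}$-approximation for \ma on two-frame instances would distinguish PLANTED from UNPLANTED whenever $\epsilon' > \epsilon$, contradicting the \textsc{Planted Dense Subgraph Conjecture}.
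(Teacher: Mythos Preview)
Your reduction has the right architecture and would succeed after one correction, but it is parametrically different from the paper's. The paper takes the auxiliary clique to have size $n^{1/4}$ (not $n^{3/8}$) and, in the PLANTED case, exhibits a witness consisting of the clique together with only $n^{3/8}$ vertices from the planted component. This yields PLANTED score $\Omega(n^{1/8-\epsilon})$ and UNPLANTED score $n^{o(1)}$; the key advantage is that the UNPLANTED analysis is essentially deterministic on the clique side (the total number of clique edges is only $\binom{n^{1/4}}{2}\le \sqrt n$), so one gets $\score_{G_1}(S)\le \sqrt n/s$ directly, and then a single Chernoff-plus-union-bound step on the random graph side gives $\score_{G_2}(S)\le O(s/\sqrt n)+n^{o(1)}$, whence $\score(S)\le n^{o(1)}$. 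Your choice $k=n^{3/8}$ shifts both cases up by roughly $n^{1/8}$ (PLANTED $\Omega(n^{1/4-\epsilon})$, UNPLANTED $\tilde O(n^{1/8})$); the gap is the same, but you now have to control the UNPLANTED optimum up to the threshold $n^{1/8}$, which is why you are forced into a more careful uniform density lemma.

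On that lemma: the bound you state, $|E_1[T]|\le O(t^2/\sqrt n+\log n)$ uniformly over all $T$, is false. For instance at $t=n^{1/4}$ the union bound has $\binom{n}{t}\approx\exp((3/4)\,t\ln n)$ terms, and the first-moment tail you write down only beats this once $a=\Theta(t)$, not $a=\Theta(\log n)$. The correct uniform bound has an additive $O(t\log n)$ (or, more sharply, $O(t)$ away from $t\approx\sqrt n$). With this correction, your small-$m$ case should use $|S|\ge m$ rather than $|S|\ge 2\tau$: then $\score_{G_1}(S)\le O(m^2/\sqrt n+m\log n)/m=O(m/\sqrt n+\log n)$, and combining with $m\le|S|\le k^2/(2\tau)$ gives $\tau\le O(n^{1/4}/\tau)+O(\log n)$, hence $\tau\le O(n^{1/8})$ as you want. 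The route you sketched via $|S|\ge 2\tau$ together with the (incorrect) $+\log n$ bound happens to give the right answer, but for the wrong reason; with the correct $+\,t\log n$ bound that route would only yield $\tau^2\le O(m\log n)$, which is too weak in the intermediate range $n^{1/4}\ll m\ll \sqrt n\log n$.
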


At the cost of having more frames, we amplify this hardness up to $n^{1/4-\epsilon}$. This is done by reducing from a recursive variant of the Planted Dense Subgraph problem studied by Charikar et. al.~\cite{charikar2016approximating}, allowing us to shrink the relative size of the graph's densest component and thus establish a bigger gap. Although the techniques used are mostly the same as in the proof of Theorem \ref{thm:dcs_ma_hard_dks18}, for ease of presentation we leave the details of our modification to Appendix \ref{sec:dks14} and simply state the result below.

\begin{theorem}
Assuming the \textsc{Planted Dense Subgraph Conjecture}, for no $\epsilon > 0$ is there a probabilistic polynomial-time algorithm approximating \ma to within a factor of $O(n^{1/4-\epsilon})$.
\label{thm:dcs_ma_hard_dks14}
\end{theorem}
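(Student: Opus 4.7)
The plan is to carry out the same overall construction as in Theorem~\ref{thm:dcs_ma_hard_dks18}, but with the single-level \textsc{Planted Dense Subgraph} hardness replaced by its recursive analogue from Charikar et al.\ and with additional frames used to enforce the finer scale restrictions that the recursion needs. In the recursive variant, the input graph carries a nested sequence of planted dense subgraphs, each at a smaller scale and higher relative density than the last; hardness of distinguishing the fully planted from the fully unplanted distribution is inherited from the base conjecture, and the effective density gap can be made close to $n^{1/2 - o(1)}$ on a region of size $n^{\alpha}$ for a suitable small $\alpha$. This is the lever that lets us push past the $n^{1/8 - \epsilon}$ barrier of the two-frame reduction.

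The reduction would use one \ma frame per level of the recursion. Frame $G_1$ carries the recursive PDS graph unchanged, and each additional frame $G_{i+1}$ (for $i\ge 1$) carries a clique gadget whose size matches the $i$-th level planted region, playing exactly the role of the single $k$-clique in the $T=2$ reduction: a clique on $k_i$ vertices caps any candidate $S$ that achieves nontrivial score in every frame at roughly $k_i$ vertices, so iteratively $S$ must be on the order of the \emph{innermost} planted region. In the YES (fully planted) case, the innermost planted region is a witness whose density in $G_1$ reflects the product of density ratios accumulated across all levels, and which automatically has good score in each clique gadget; in the NO (fully unplanted) case, the induced density in $G_1$ at every relevant scale is polylogarithmic, while the clique gadgets prevent escape to larger $|S|$. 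Tracking the densities level by level gives a YES/NO ratio of $n^{1/4 - o(1)}$ once $\ell$ is taken large enough.

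The main obstacle I anticipate is the same one that appears in Theorem~\ref{thm:dcs_ma_hard_dks18}, now amplified across $\ell$ scales: the auxiliary gadget frames must be constructed without any knowledge of where the planted regions sit, since the whole point of the conjecture is that those locations are hidden. The resolution, as in the two-frame case, is to let the gadgets occupy vertex sets that are chosen independently of the planting distribution; the Charikar et al.\ recursive framework is designed precisely so that hardness survives when independent structure is overlaid at each scale, and so that the ``alignment cost'' between the gadget and the planted region can be absorbed into the density budget. Once this is in place, choosing the recursion depth $\ell$ to optimize the trade-off between planted region size and density ratio, and verifying the detailed bookkeeping of edges at each scale, are mechanical and of the same flavor as the $T=2$ analysis, which is why these details are deferred to Appendix~\ref{sec:dks14}.
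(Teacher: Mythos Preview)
Your proposal correctly identifies the recursive \textsc{Planted Dense Subgraph} hardness of Charikar et al.\ as the amplification tool, but the gadget layer you describe does not match what the paper does and, more importantly, does not address the real obstruction. The paper's construction uses only a \emph{single} clique gadget, of size $n^{1/2^{r}}$ matching the \emph{innermost} planted region; there is no nested tower of cliques. Multiple cliques at scales $k_1 > k_2 > \cdots > k_\ell$ on disjoint vertex sets are largely redundant, since the binding constraint on $\score(S)$ from the clique frames is always the smallest clique. Your ``iteratively $S$ must be on the order of the innermost planted region'' argument is not how the analysis works: a clique on $k_i$ vertices does not cap $|S|$ at $k_i$, it only bounds $\score(S) \le k_i^2/(2|S|)$, and only the smallest $k_i$ matters.

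The genuine gap is that you miss the actual obstruction the paper singles out as a ``fallacious proof'': with only the PDS frame and clique gadgets, one can always take two vertices (an edge) from each frame and obtain $\score = \Omega(1)$ regardless of whether the input is \texttt{PLANTED} or \texttt{UNPLANTED}. Clique gadgets upper-bound $|S|$ for high-scoring solutions but do nothing to rule out \emph{small} $S$, and without a lower bound on $|S|$ the Chernoff argument for the NO case collapses. The paper's fix is entirely different from yours: it pads the sequence with $n^{2}$ additional i.i.d.\ $G(n, n^{-3\epsilon'})$ random frames, so that by a union bound any $S$ of size $O(n^{\epsilon'})$ w.h.p.\ induces zero edges in at least one padding frame, forcing $|\opt| > n^{\epsilon'}$ and restoring the Chernoff step. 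The obstacle you flag (that gadgets cannot depend on the hidden planted location) is a non-issue, since $U$ is a fresh vertex set disjoint from $V$; the issue that matters is the one your proposal does not mention.
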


\begin{proof}[Proof of Theorem \ref{thm:dcs_ma_hard_dks18}]
Let $G$ be an input to the \textsc{Planted Dense Subgraph} problem, with vertex and edge sets $V$ and $E$. We construct the graph sequence $(G_1, G_2)$ of our \ma instance as follows.

\begin{itemize}
    \item Vertices: Let $U$ be a set of $n^{1/4}$ vertices not in $V$. The vertex set of both $G_1$ and $G_2$ is $V' = V \cup U$.
    \item Edges: $E_1$ contains an edge between every pair of vertices in $U$, and no edges outside of $U$ (thus, $G_1$ is an $n^{1/4}$-clique plus $n$ isolated vertices). $E_2$, on the other hand, is just $E$ (and thus $G_2$ has every vertex in $U$ isolated).
\end{itemize}

We now proceed to prove bounds on $\score(\opt)$. In the following, ``with high probability (w.h.p.)'' means with probability at least $1 - \frac{1}{\poly(n)}$.
\begin{claim}
If $G$ is a \texttt{PLANTED} instance, then $\score(\opt) = \Omega(n^{1/8-\epsilon})$ with high probability.
\label{clm:dcs_planted_good}
\end{claim}

\begin{proof} 
Let $S$ be the union of $U$ and $n^{3/8}$ vertices from the planted component of $G$. Then
\[|E_1[S]| = \binom{n^{1/4}}{2} = \Omega(n^{1/2})\]

and, by a standard application of the Chernoff bounds
\[|E_2[S]| = \Omega\left(n^{-1/4-\epsilon} \cdot \binom{n^{3/8}}{2}\right) = \Omega(n^{1/2-\epsilon})\]
with high probability. Thus,
\[\text{score}(S) = \min\left(\dfrac{|E_1[S]|}{|S|}, \dfrac{|E_2[S]|}{|S|}\right) = \dfrac{\min(\Omega(n^{1/2}), \Omega(n^{1/2-\epsilon}))}{ n^{1/4} + n^{3/8} } = \Omega(n^{1/8-\epsilon}).\]
\end{proof}

\begin{claim}
If $G$ is an \texttt{UNPLANTED} instance, then $\score(\opt) = n^{o(1)}$ with high probability.
\label{clm:dcs_unplanted_bad}
\end{claim}

\begin{proof} 
Let $S$ be an optimal solution, and let $s = |V \cap S|$ be the number of vertices it contains from $V$. Since $\opt \leq \text{score}_{G_2} < s$, we can assume that $s = \Omega(n^{\epsilon'})$ for some $\epsilon' > 0$ (otherwise the proof is trivial). We now bound the quality of $S$ in each of $G_1$ and $G_2$. In $G_1$,
\[\text{score}_{G_1}(S) = \dfrac{|E_1[S]|}{|S|} \leq \dfrac{|E_1[S]|}{s} \leq \dfrac{|E_1|}{s} \leq \dfrac{\sqrt{n}}{s}\]

Meanwhile, in $G_2$, a standard application of Chernoff bounds implies
\[\score_{G_2}(S) = \dfrac{|E_2[S]|}{|S|} \leq \dfrac{|E_2[S]|}{s} \leq \dfrac{O(s^2/\sqrt{n})}{s} = O\left(\dfrac{s}{\sqrt{n}}\right)\]

with high probability. Therefore,
\begin{align*}
    \score(S) &= \min(\score_{G_1}(S), \score_{G_2}(S)) \\
    &\leq \min\left(\sqrt{n}/s),  O(s/\sqrt{n})\right) \\
    &\leq \sqrt{(\sqrt{n}/s) \cdot O(s/\sqrt{n})} \\
    &= O(1) = n^{o(1)}.
\end{align*}
\vspace{-.1in}
\end{proof}

By Claims \ref{clm:dcs_planted_good} and \ref{clm:dcs_unplanted_bad}, the gap between the \texttt{PLANTED} and \texttt{UNPLANTED} cases is at least
\[\dfrac{\Omega(n^{1/8-\epsilon})}{O(n^{o(1)})} \geq \Omega(n^{1/8 - \epsilon'}).\]
with high probability. This concludes the proof of Theorem \ref{thm:dcs_ma_hard_dks18}.
\end{proof}

\subsection{Integrality Gap Example}

In \cite{jethava2015finding}, the authors introduce the following linear program (DCS\_LP) for \ma.

\begin{align*}
	\textsc{maximize\,\,\,\,}\hspace{.1in}&\hspace{.05in} z
	\\  
	\textsc{subject to}\hspace{.2in} &\textstyle \sum_i y_i = 1 \\
	& x_{(u,v)} \leq y_u & \textrm{for all } u, v \in V \\
	& x_{(u,v)} \leq y_v & \textrm{for all } u, v \in V \\
	& z \leq \textstyle \sum_{e \in E_t} x_{e,t} & \textrm{for all } t \in T \\
	& \mathbf{x}, \mathbf{y} \geq \mathbf{0}
\end{align*}
	
When $T = 1$, this simplifies to the LP shown by \cite{charikar2000greedy} to solve the \dsf problem exactly. When $T$ is allowed to be  larger than $1$, we show that the integrality gap of this LP can be near-linear.\footnote{We use the term ``integrality gap'' loosely here. As the LP is normalized, the intended solution has $y_v$ set to $1/|S|$ when $v$ is in the optimal solution $S$ and to $0$ otherwise. What we measure is really the ratio of LP-OPT to the score of true optimal feasible solution for the given instance.}

\begin{theorem}
DCS\_LP has an integrality gap of $\Omega\left(\tfrac{n}{\log n}\right)$.
\label{thm:lpgap}
\end{theorem}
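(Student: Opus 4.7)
The plan is to construct an explicit family of DCS-MA instances on $n$ vertices whose integrality gap is $\Omega(n)$, which a fortiori is $\Omega(n/\log n)$. Take $V = \{a_1, a_2\} \cup B$ with $|B| = n-2$, and use $T = n-1$ frames: one ``sparse'' frame $G_0 = \{(a_1,a_2)\}$ consisting of a single edge, and for each $b \in B$ a ``dense'' frame $G_b$ whose edges form the star $\{(b,b') : b' \in B \setminus \{b\}\}$. To lower-bound DCS\_LP, I would set $y_{a_1} = y_{a_2} = \alpha$ and $y_b = \beta$ for every $b \in B$. The normalization $\sum_v y_v = 1$ becomes $2\alpha + (n-2)\beta = 1$, and with $x_{e,t} = \min(y_u, y_v)$, the per-frame constraints reduce to $z \leq \alpha$ (from $G_0$) and $z \leq (n-3)\beta$ (from each $G_b$, since all $n-3$ star-edges contribute $\beta$). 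Balancing $\alpha = (n-3)\beta$ and solving the two linear equations gives $\beta = 1/(3n-8)$ and $\alpha = (n-3)/(3n-8)$, which is feasible and yields DCS\_LP $\geq \alpha = \Omega(1)$.

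Next, I would show that any $S \subseteq V$ with positive score on every frame must equal $V$. Indeed, $G_0$ forces $\{a_1, a_2\} \subseteq S$ (otherwise $|E_{G_0}[S]| = 0$); and since every edge of $G_b$ is incident to $b$, the frame $G_b$ forces $b \in S$. Taking the union of these requirements over $b \in B$ gives $S \supseteq B$, hence $S = V$. With $|S| = n$, the score on $G_0$ is $1/n$ while the score on each $G_b$ is $(n-3)/n$; the minimum over frames is therefore $1/n$, so the integral optimum equals $1/n$.

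Combining the two bounds gives an integrality gap of $\alpha n = (n-3)n/(3n-8) = \Omega(n)$, which is $\Omega(n/\log n)$. The only minor technical step is verifying LP feasibility of the proposed $y$ (a direct substitution once $\alpha = (n-3)\beta$ is chosen) and ruling out $S \subsetneq V$ for the integral optimum; both are immediate from the star structure of the frames. The core insight is that each dense frame $G_b$ packs enough edges that the LP satisfies its constraint even with per-vertex weight $\beta = \Theta(1/n)$, yet structurally each star still forces its center $b$ into any positive-score integral set --- producing the $\Omega(n)$ discrepancy between fractional and integral optima.
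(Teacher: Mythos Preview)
Your construction is correct and in fact proves a \emph{stronger} bound than the paper claims: you obtain an integrality gap of $\Omega(n)$, not merely $\Omega(n/\log n)$. The paper's instance uses ``nested'' stars --- frame $G_k$ is a star centered at $v_{k+1}$ with leaves $v_1,\dots,v_k$ --- and the only feasible LP solution the authors exhibit assigns harmonically decaying weights $y_i = \Theta(1/i)$, yielding LP value $\Theta(1/\log n)$ and hence a gap of $\Theta(n/\log n)$. Your construction is cleaner because you decouple the single-edge frame from the star frames: the two vertices $a_1,a_2$ carrying the lone edge of $G_0$ are isolated in every other frame, so their $y$-mass can be pushed up to a constant without hurting any star constraint, while the uniform stars on $B$ still force $b\in S$ integrally. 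This separation is exactly what buys you the extra $\log n$ factor. Both proofs share the same integral argument (every vertex is a star center somewhere, so any positive-score $S$ must be all of $V$, giving integral optimum $1/n$); the difference is entirely in how much the fractional side can exploit the instance.
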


\begin{proof}
Label $n$ vertices $1$ through $n$, and consider the instance composed of the following sequence of $T = n-1$ graphs, $G_1$ through $G_{n-1}$. $G_1$ contains a single edge from $v_1$ to $v_2$.  $G_2$ contains two edges: one from $v_1$ to $v_3$ and the other from $v_2$ to $v_3$. In general, $G_k$ contains $k$ total edges, each with one endpoint at $v_{k+1}$ and the other at $v_i$ for $i \in [k]$.

We first consider the optimal integral solution to this graph sequence. Because each vertex is the center of a star in at least one of the graphs ($G_1$ has both $v_1$ and $v_2$ as centers), not picking even one of the vertices ensures that the corresponding graph attains average degree $0$, and thus the minimum average degree for the sequence is also $0$. Thus, to get any positive objective value, we must pick every vertex. Consequently, because $G_1$ contains a single edge, the objective value of our solution is just $1/n$.

Now we consider what the LP can achieve. Set $h = \tfrac{1}{1 + H_{n-1}}$ (where $H_k = \sum_{i=1}^k \tfrac{1}{i}$), and consider the fractional solution that assigns $y_1 = h$ and  $y_i = \tfrac{h}{i-1}$ for $i \geq 2$. The sum of these is $h\cdot \left(1 + \sum_{i=2}^n \tfrac{1}{i-1}\right) = h \cdot \tfrac{1}{h} = 1.$ Note that these values are monotonically nonincreasing in $i$, and thus if $i \geq j$, then we can set $x_{(v_i,v_j)} = \min(y_{v_i}, y_{v_j}) = y_{v_i} = \tfrac{h}{i-1}$.  For each $k$, graph $G_k$ contains $k$ edges between $v_{k+1}$ and vertices with a smaller index. Thus, our assignment induces exactly $k \cdot \frac{h}{k+1-1} = h$ fractional edges in each $G_k$, meaning that LP-OPT is at least $h = \tfrac{1}{H_{n-1}+1} = \Omega\left({1}/{\log n}\right)$.  Thus, the integrality gap is $\frac{\Omega\left(1/ \log n\right)}{1/n} = \Omega\left(\frac{n}{\log n}\right)$.
\end{proof}

\section{Algorithms and Hardness for \am}
\label{sec:am}

\subsection{Hardness of \am}
\label{sec:hardnessam}

We now show that \am is \np-hard even to approximate to any significantly nontrivial factor.

\begin{theorem}
\am has no $n^{1-\epsilon}$-approximation algorithm for any $\epsilon > 0$ unless $\p = \np$.
\label{thm:MIS_to_AM}
\end{theorem}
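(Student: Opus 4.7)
The plan is to give a direct, approximation-preserving reduction from \misf that keeps the vertex set unchanged. Given an instance $H = (V, E)$ of \misf with $n$ vertices, the \am instance I would construct also has vertex set $V$ and $T = n$ frames, one per vertex of $H$. The frame $G_v$ is taken to be the star centered at $v$ whose leaves are the $H$-non-neighbors of $v$, i.e.\ $E(G_v) = \{\{v, u\} : u \in V,\ u \neq v,\ (u, v) \notin E\}$. The design intention behind this gadget is that each frame will contribute at most $1$ to the \am objective, and it will contribute exactly $1$ precisely when $v \in S$ witnesses the independence of $S$ in $H$.

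The main step is then a short case analysis of $\text{min-deg}(G_v[S])$ for arbitrary $S \subseteq V$ with $|S| \geq 2$. If $v \notin S$, every edge of $G_v$ is incident to $v$ and so $G_v[S]$ is edgeless, giving min-degree $0$. If $v \in S$ but some $u \in N_H(v) \cap S$ exists, then $\{v, u\} \notin E(G_v)$ and $u$ has no incident edge in $G_v[S]$, again giving min-degree $0$. Finally, if $v \in S$ and $N_H(v) \cap S = \emptyset$, then every other vertex of $S$ lies in $\bar N_H(v)$ and is a pendant leaf of degree $1$, while $v$ has degree $|S| - 1 \geq 1$, so the min-degree is exactly $1$. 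Summing over frames, the \am score of $S$ equals $|I(S)|$, where $I(S) \eqdef \{v \in S : N_H(v) \cap S = \emptyset\}$ is the set of $H[S]$-isolated vertices of $S$.

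With this characterization in hand, the reduction is essentially immediate. The set $I(S)$ is itself an independent set in $H$, because any edge of $H$ between two vertices of $I(S)$ would directly violate the defining condition, so $\score(S) = |I(S)| \leq \alpha(H)$ for every $S$. Conversely, taking $S$ to be a maximum independent set $I^*$ of $H$ gives $I(S) = I^*$ and hence score exactly $\alpha(H)$. Thus $\OPT_{\am} = \alpha(H)$ (assuming $\alpha(H) \geq 2$, which is free for any non-trivial input), so the reduction preserves the objective exactly while leaving the vertex count equal to $n$. Invoking the known $n^{1-\epsilon}$ \np-hardness of approximating \misf then immediately yields the matching hardness for \am.

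There is no substantive obstacle here; the only design decision worth highlighting is choosing the star gadget rather than, say, a clique on $\{v\} \cup \bar N_H(v)$. The clique alternative would make $\text{min-deg}(G_v[S])$ scale with $|S|$ and couple the \am objective to both $|S|$ and $|V \setminus N_H(S)|$, breaking the clean identification with $\alpha(H)$ and forcing a more delicate gap calculation. The star instead makes each frame contribute exactly the indicator that ``$v \in S$ and $v$ has no $H$-neighbor in $S$'', linearizing the \am score into a vertex count that is maximized precisely by a maximum independent set of $H$.
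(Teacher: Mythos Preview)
Your proposal is correct and takes essentially the same approach as the paper: the construction of one star-per-vertex frame $G_v$ with leaves the $H$-non-neighbors of $v$ is exactly the paper's gadget, and your case analysis showing $\score(S)=|I(S)|$ is a slightly more explicit version of the paper's two-direction argument. The only cosmetic differences are that the paper phrases the triviality assumption as ``$H$ not complete'' rather than ``$\alpha(H)\ge 2$'', and your closing remark contrasting the star with a clique gadget is an addition not present in the paper.
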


\begin{proof}
We show this by presenting a direct reduction from \misf (\mis), which is well-known to have the aforementioned hardness factor~\cite{hastad1999clique, zuckerman2006linear}. Given an \mis instance with a graph $G=(V,E)$ that is not complete (the problem is trivial otherwise), we construct a \am instance consisting of one frame $G_v$ for each vertex $v \in V$. In each such frame $G_v$, all of $v$'s neighbors are singletons, while the remaining vertices form a star centered at $v$. We now show that the size of the maximum independent set in $G$ is equal to the maximum feasible objective in the constructed \am instance.

Suppose $I \subseteq V$ is an independent set of size $k \geq 2$ in $G$. Then consider the solution $I$ to the constructed \am instance. For each $v \in I$, we score $1$ point in frame $G_v$, since $v$ has a neighbor in $G_v$ (some other vertex in $I$) and none of the singletons of $G_v$ (the neighbors of $v$ in $G$) are in $I$. Thus, in the \am instance, $\text{score}(I) \geq k$, and thus $\opt(\am) \geq \opt(\mis)$.

In the reverse direction, suppose that $S \subseteq V$ is a solution to the constructed \am instance achieving $\text{score}(S) = k$. It is easy to check that we can only score at most one point per frame; thus there must be $k$ frames in which $S$ induces nonzero min degree. Consider any such frame $G_v$, corresponding to $v \in V$. Since our score in $G_v$ is $1$, we must have $v \in S$, as it is the center of the star in $G_v$. And we cannot have chosen any of $v$'s neighbors in $G$, for otherwise there would be a singleton in $G_v[S]$. We conclude that $S$ contains at least $k$ vertices, no pairs of which are neighbors in $G$; i.e. $S$ contains an independent set of size $k$ (precisely those vertices at the centers of frames in which we scored). Thus, $\opt(\mis) \geq \opt(\am)$.

Therefore, the optimal objective value of the constructed \am instance exactly equals that of the given \textsc{MIS} instance, so \am has a hardness factor at least as large as that of \textsc{MIS}.
\end{proof}

\subsection{Fixed-Parameter Algorithms for \am}
\label{sec:fixedparameteram}

In light of the above hardness result, we now direct our attention to fixed-parameter algorithms. In particular, we show how to solve \am for small $T$ by generalizing classical algorithms for finding $k$-cores in a graph. Concretely, we provide an $n^T \cdot \poly(n,T)$-time algorithm for the exact version of \am, as well as an $f(T) \cdot \poly(n,\epsilon^{-1})$-time (i.e. FPT-time) $(1+\epsilon)$-approximation algorithm for some computable function $f$.

Given a graph sequence $\mathcal{G} = (G_1, G_2, \cdots, G_T)$ over vertices $V$, we say that a set $S \subseteq V$ is a \emph{$(k_1,\cdots,k_T)$-core} if it induces minimum degree at least $k_i$ in each frame $G_i$. In other words, for all $i \in [T]$, $S$ satisfies $\text{min-deg}(G_i[S]) \geq k_i$.

A $(k_1,\cdots,k_T)$-core can be computed in $\text{poly}(n,T)$ time if one exists, via a simple algorithm described in two recent works~\cite{azimi2014k,galimberti2017core}; we include it here for completeness. Starting from a set $S$ containing all of $V$, we repeatedly remove from $S$ any vertex whose degree in $G_i$ is less than $k_i$. When no such vertices remain, we return $S$. The returned set is either empty or the desired core. This works because if a vertex $v$ is deleted, then by definition it cannot be part of any $(k_1,\cdots,k_T)$-core of a graph induced by a subset of $S$.

We can now use this procedure as a black box to derive the following two results.

\begin{theorem} There is an exact algorithm for \am with running time $n^T \cdot \poly(n,T)$.
\label{thm:dcs_am_alg_exact}
\end{theorem}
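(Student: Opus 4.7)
The plan is to reduce the exact optimization problem to $n^T$ invocations of the $(k_1,\ldots,k_T)$-core feasibility subroutine described in the paragraph preceding the theorem. Specifically, I would enumerate all tuples $(k_1,\ldots,k_T) \in \{0,1,\ldots,n-1\}^T$ of target minimum degrees; for each such tuple, invoke the peeling procedure to produce a $(k_1,\ldots,k_T)$-core $S$ (or detect that none exists); compute $\score(S) = \sum_{i \in [T]} \text{min-deg}(G_i[S])$ when $S$ is non-empty; and finally output the set $S$ achieving the maximum score over all tuples.

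For correctness, let $S^*$ be an optimal solution and set $k_i^* := \text{min-deg}(G_i[S^*])$ for each $i \in [T]$. Then $S^*$ is itself a $(k_1^*,\ldots,k_T^*)$-core, so when the enumeration reaches this particular tuple, the peeling subroutine is guaranteed to return some non-empty core $S$. By the definition of a core, $\score(S) = \sum_i \text{min-deg}(G_i[S]) \geq \sum_i k_i^* = \score(S^*) = \OPT$, and since $\OPT$ is the maximum feasible objective, equality holds. Thus the set our algorithm returns has score exactly $\OPT$. The restriction $k_i \in \{0,\ldots,n-1\}$ is without loss of generality since minimum degrees in induced subgraphs of an $n$-vertex graph are integers in this range.

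For the running time, the enumeration considers $n^T$ tuples; for each, the peeling subroutine runs in $\poly(n,T)$ time and computing the induced minimum degrees and their sum is also $\poly(n,T)$. Totaling across tuples yields the claimed $n^T \cdot \poly(n,T)$ bound.

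I do not anticipate any substantive obstacle here: the argument is essentially a clean reduction from exact \am optimization to polynomially-many instances of a tractable feasibility problem, leveraging the fact that the optimal objective decomposes coordinate-wise along the $T$ frames as a vector of achievable minimum degrees. The only mild subtlety is noting that matching the tuple to $(k_1^*,\ldots,k_T^*)$ suffices even though the peeling procedure might return a different core than $S^*$ itself, which is handled by the core-property inequality above.
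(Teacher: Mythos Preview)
Your proposal is correct and follows essentially the same approach as the paper: enumerate all $n^T$ target tuples $(k_1,\ldots,k_T)$ and invoke the $\poly(n,T)$-time peeling subroutine for each. The only cosmetic difference is that the paper reports the largest $\sum_i k_i$ over tuples admitting a non-empty core, whereas you evaluate $\score(S)$ on the returned core and take the maximum; both yield the optimal value, and your version makes the correctness argument (including the point that the peeled core need not equal $S^*$) more explicit than the paper does.
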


\begin{theorem} For some computable function $f$ and every $\epsilon > 0$, there is a $(1+\epsilon)$-approximation algorithm for AM with running time $f(T) \cdot \poly(n, \epsilon^{-1})$ (i.e, the algorithm is fixed parameter tractable in $T$).
\label{thm:dcs_am_alg_approx}
\end{theorem}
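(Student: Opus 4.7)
The plan is to reuse the $(k_1,\ldots,k_T)$-core peeling procedure behind Theorem~\ref{thm:dcs_am_alg_exact}, but enumerate only a geometrically-spaced grid of degree-target tuples rather than all $n^T$ of them. Concretely, let
\[
  \mathcal{K}_\epsilon := \{0\} \cup \bigl\{\lceil(1+\epsilon)^j\rceil : 0 \le j \le \lceil\log_{1+\epsilon} n\rceil\bigr\},
\]
a set of size $O(\log n / \epsilon)$. For every tuple $(k_1,\ldots,k_T) \in \mathcal{K}_\epsilon^T$, run the $\poly(n,T)$-time peeling procedure recalled just before Theorem~\ref{thm:dcs_am_alg_exact} to obtain the maximal $(k_1,\ldots,k_T)$-core $S$, compute $\sum_i \text{min-deg}(G_i[S])$, and output the best solution encountered.

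For the approximation guarantee, let $S^*$ attain $\opt$ with $k_i^* := \text{min-deg}(G_i[S^*])$, and let $\hat k_i \in \mathcal{K}_\epsilon$ be the largest grid value not exceeding $k_i^*$. By construction $\hat k_i \ge k_i^*/(1+\epsilon)$ for $k_i^* \ge 1$ (with $\hat k_i = 0 = k_i^*$ in the trivial case). Because $S^*$ itself witnesses that the $(\hat k_1,\ldots,\hat k_T)$-core is nonempty, the set $S$ returned by the peeling procedure on this tuple satisfies $\text{min-deg}(G_i[S]) \ge \hat k_i$ for every $i$, so its score is at least $\sum_i \hat k_i \ge \opt/(1+\epsilon)$, as needed.

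The running time is $|\mathcal{K}_\epsilon|^T \cdot \poly(n,T) = O(\log n/\epsilon)^T \cdot \poly(n,T)$. To rewrite this in the claimed FPT-in-$T$ form $f(T)\cdot\poly(n,\epsilon^{-1})$, one handles the offending factor $(\log n)^T$ by the case split $T \le \sqrt{\log n}$ (in which $(\log n)^T \le n$) versus $T > \sqrt{\log n}$ (in which $\log n < T^2$ and hence $(\log n)^T \le T^{2T}$), absorbing the excess $T$-dependence into $f(T) = T^{O(T)}$.

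The approximation analysis follows immediately from the coordinate-wise monotonicity of cores under tuple ordering, so I expect no conceptual obstacle; the only subtlety is the running-time bookkeeping above, where one must be slightly careful to separate the $T$-dependence from the polynomial dependence on $n$ and $\epsilon^{-1}$.
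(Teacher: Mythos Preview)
Your approach is essentially identical to the paper's: enumerate a geometrically-spaced grid of degree-target tuples, run the core-peeling subroutine on each, and then argue that the factor $(\log n)^T$ splits as $f(T)\cdot\poly(n)$. The only cosmetic difference is that you bound $(\log n)^T$ via a threshold case-split on $T$ versus $\sqrt{\log n}$, whereas the paper applies AM--GM to $T\log\log n \le \tfrac{1}{2}(T^2 + (\log\log n)^2)$; both arguments produce an $f(T)$ of the form $2^{O(T^2)}$.
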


\begin{proof}[Proof of Theorem \ref{thm:dcs_am_alg_exact}]
This algorithm simply returns the largest integer $k$ such that $\mathcal{G}$ has a $(k_1,\cdots,k_T)$-core with $\sum_{i=1}^T k_i = k$. Since there are at most $n^T$ tuples of the form $(k_1,\cdots,k_T)$, this runs in $n^T \cdot \poly(n,T)$ time.
\end{proof}

\begin{proof}[Proof of Theorem \ref{thm:dcs_am_alg_approx}]
As before, we intend to return the largest integer $k$ such that $\mathcal{G}$ has a $(k_1,\cdots,k_T)$-core for some $\sum_{i=1}^T k_i = k$. However this time, we only consider those tuples $(k_1,\cdots,k_T)$ such that $k_i = (1+\epsilon)^{\ell_i}$ for some integers $\ell_1,\cdots,\ell_T$. As one of the solutions considered contains the optimal solution with the corresponding vector rounded down to the nearest power of $1+\epsilon$ (and thus each entry of the vector is within a $(1+\epsilon)$ factor of those in $\opt$), the sum of the entries in some rounded-down vector is within a $(1+\epsilon)$ factor of the value of the optimal solution.

The upshot is that we have a $(1+\epsilon)$-approximation algorithm that only considers $O\left((\log_{1+\epsilon}{n})^T\right) = O((\log n)^T / \epsilon)$ different $k$-cores. Thus, the total running time is $(\log n)^T \cdot \poly(n, T, \epsilon^{-1})$. By AM-GM,
\[(\log n)^T = 2^{T \log\log n} \leq 2^{\frac{T^2 + \log\log^2 n}{2}} = 2^{T^2/2} \cdot 2^{(\log\log^2 n) / 2} = 2^{T^2/2} \cdot n^{o(1)}\]

Therefore, $(\log n)^T \cdot \poly(n, T, \epsilon^{-1}) = f(T) \cdot \poly(n, T, \epsilon^{-1}) = f(T) \cdot \poly(n,\epsilon^{-1}).$
\end{proof}


\bibliography{references}


\appendix

\section{Proof of \texorpdfstring{$n^{1/4-\epsilon}$}{n\textasciicircum1/4-eps} Hardness for \texorpdfstring{\kma}{Densest k-Common Subgraph} from Planted DkS}
\label{sec:dks14}

A critical tool in this section will be the analysis of the \textsc{Recursive Planted Dense Subgraph} problem first studied in~\cite{charikar2016approximating}. The statement of this problem takes two equal-length vectors, the \emph{size vector} $\vec{n} = (n_1, n_2, \cdots n_r)$ and the \emph{log-density vector} $\vec{d} = (p_1, p_2, \cdots p_r)$ as parameters. Much like in the standard \textsc{Planted Dense Subgraph} problem, inputs are then sampled from either a planted or unplanted distribution. In the unplanted case, the returned graph is simply a $G(n_1, p_1)$ Erd\H{o}s-R\'enyi random graph. In the planted case, the distribution $\mathcal{D}(\vec{n}, \vec{p})$ of returned graph is constructed recursively as follows: 

  \[
    \mathcal{D}(\vec{n},\vec{p})=
    \left\{
    \begin{array}{llr}
      G(n_1, {n_1}^{p_1 - 1})&& \text{if } r=1\\
      G(n_1, {n_1}^{p_1 - 1})& \text{planted with a graph sampled} & \text{if } r>1 \\
&\text{from }\mathcal{D}((n_2, n_3, \cdots, n_r), (p_2, p_3, \cdots, p_r))  &
    \end{array}
    \right.
  \]
  
Here, $G$ being ``planted with'' $H$ means that a randomly-chosen subgraph of $G$ of order $|V(H)|$ has a copy of $H$'s edges unioned into its current induced edge set (thus, this construction only makes sense for monotonically decreasing vectors $\vec{n}$). In particular, we reconstruct the un-recursed form of \textsc{Planted Dense Subgraph} when $\vec{n} = (n, \sqrt{n})$ and $\vec{p} = (\tfrac{1}{2}, \tfrac{1}{2} - \epsilon)$. As we increase the length of the two parameter vectors, we get additional ``layers'' of planting, with each layer included in the previous.

A priori, one might expect that the problem of distinguishing the two distributions becomes easier as the number of rounds of planting increases (as additional planting can only ever increase the density of all extant dense planted components). A lemma central to~\cite{charikar2016approximating}, however, states that for a carefully chosen parameter sequence, the problem may remain just as intractable as the two-layer (``un-recursed'') problem. In particular, adapted to our use case, Lemma~2 of that paper effectively states the following:

\begin{lemma}[Lemma 4 of~\cite{charikar2016approximating}]
For $r>2$, let $\vec{n} = (n, n^{1/2}, n^{1/4}, \cdots, n^{1/2^{r}})$ and let $\vec{p} =$

\mbox{$(\tfrac{1}{2}, \tfrac{1}{2}-\epsilon_1, \tfrac{1}{2}-\epsilon_2, \cdots, \tfrac{1}{2}-\epsilon_r)$}, where $ \epsilon_{i+1} \in (\tfrac{\epsilon_{i}}{2}, \epsilon_{i})$ for $i < r$. Assuming the Planted Dense Subgraph conjecture, there is no probabilistic polynomial-time algorithm for the Recursed Planted Dense Subgraph problem with parameters $\vec{n}$ and $\vec{p}$.
\end{lemma}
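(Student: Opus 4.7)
The plan is to prove the lemma by induction on $r$, reducing back to the base case $r = 2$ which is exactly the Planted Dense Subgraph Conjecture itself. Specifically, I would show that if an efficient algorithm $A$ existed for the recursive $r$-layer problem with the given parameters, then a standard hybrid argument would yield an efficient algorithm for the 2-layer PDS problem, contradicting the conjecture.

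The hybrid argument considers a sequence of distributions $D_0, D_1, \ldots, D_{r-1}$, where $D_0$ is the unplanted distribution $G(n_1, n_1^{p_1-1})$ and $D_j$ is obtained by carrying out only the outermost $j$ levels of the recursive construction (so $D_{r-1}$ is the fully planted distribution). It suffices to show that each consecutive pair $D_j$ and $D_{j+1}$ is computationally indistinguishable under the PDS conjecture; a telescoping argument then shows $D_0$ and $D_{r-1}$ are computationally indistinguishable, which is precisely the claim.

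The reduction for a single hybrid step proceeds as follows. Distinguishing $D_j$ from $D_{j+1}$ amounts to detecting an additional planted layer of order $n_{j+2}$ nested inside the already-planted layer at scale $n_{j+1}$. Given a fresh 2-layer PDS instance at the appropriate scale $m = n_{j+1}$, one embeds it as the $(j+1)$-th layer of a reduction-generated $D_j$-or-$D_{j+1}$ sample: the outer $j$ layers of planting are produced by the reduction itself (sampling the random subsets and edge sets from the prescribed Erd\H{o}s--R\'enyi distributions), and the provided PDS instance is inserted at the innermost position. If the PDS instance is unplanted, the global output is distributed as $D_j$; if it is planted, the output is distributed as $D_{j+1}$. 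An efficient distinguisher between $D_j$ and $D_{j+1}$ therefore decides the base PDS problem.

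The main obstacle is showing that the embedding is statistically faithful --- i.e., that the reduction's output really matches $D_j$ or $D_{j+1}$ in distribution (or is at least close in total variation). The subtle point is that the outer planted layers, when restricted to the innermost subset on which we paste the given PDS instance, already induce some random edges with a prescribed probability, and these induced edges must agree with the ambient distribution of the embedded PDS instance. This is exactly where the carefully calibrated log-density sequence $\epsilon_{i+1} \in (\epsilon_i/2, \epsilon_i)$ does the work: the upper bound $\epsilon_{i+1} < \epsilon_i$ ensures the planted density at each recursive layer is not too much smaller than what induced subgraphs of the preceding layer already carry, while the lower bound $\epsilon_{i+1} > \epsilon_i/2$ prevents the gap from collapsing and keeps the planted density detectably larger than the induced background at the corresponding scale. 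Verifying these calibration identities, bookkeeping the resulting small statistical error across $r - 2$ hybrids, and checking that the cumulative advantage remains non-negligible, is the bulk of the technical work.
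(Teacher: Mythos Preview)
The paper does not prove this lemma. It is quoted verbatim as ``Lemma~4 of~\cite{charikar2016approximating}'' and used as a black box in Appendix~\ref{sec:dks14}; no argument for it appears anywhere in the present paper. Consequently there is no ``paper's own proof'' against which to compare your attempt.

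That said, your sketch is a reasonable outline of the kind of hybrid argument one expects for such a statement, and it correctly identifies the central difficulty: when embedding a fresh two-layer PDS instance at level $j{+}1$, the surrounding layers $1,\ldots,j$ already contribute edges on the innermost vertex set, so the ambient density seen by the embedded instance is a union of several Erd\H{o}s--R\'enyi processes rather than a single one. Getting the reduction to be statistically faithful therefore requires either (i) showing that this accumulated background is dominated by the layer-$j$ density up to negligible total-variation distance, or (ii) generating the outer layers \emph{conditioned} on the embedded instance so that the marginals match exactly. Your proposal gestures at (i) via the calibration condition $\epsilon_{i+1}\in(\epsilon_i/2,\epsilon_i)$ but does not actually carry out the density arithmetic; in particular, you would need to verify that at scale $m=n_{j+1}$ the effective background log-density and the planted log-density land in the regime covered by the base conjecture. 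There is also a minor indexing slip: with $\vec n$ of length $r{+}1$ the fully planted distribution is $D_r$, not $D_{r-1}$. If you want a rigorous proof you should consult~\cite{charikar2016approximating} directly, since the present paper offers nothing to check your argument against.
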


We now proceed to use this lemma to amplify the hardness obtained in Theorem \ref{thm:dcs_ma_hard_dks18} up to $n^{1/4-\epsilon}$. We first exhibit a straightforward but fallacious approach, and then show how to correct its flaw.

\begin{proof}[Fallacious proof of Theorem \ref{thm:dcs_ma_hard_dks14}]

Let $G=(V,E)$ be an input to the recursed Planted Dense Subgraph problem with $r$ rounds of planting in which all log densities lie within the interval $(1/2-\epsilon, 1/2]$. The construction of this ``proof'' begins much like that of Theorem \ref{thm:dcs_ma_hard_dks18}.

\subparagraph*{Vertices}
Let $U$ be a set of $n^{1/2^{r}}$ vertices not in $V$. The vertex set of both $G_1$ and $G_2$ is $V' = V \cup U$.
\subparagraph*{Edges}
$E_1$ contains an edge between every pair of vertices in $U$, and no edges outside of $U$ (and thus, $G_1$ is an $n^{1/2^{r}}$-clique plus $n$ isolated vertices). $E_2$, on the other hand, is just $E$ (and thus $G_2$ has every vertex in $U$ isolated).

One can now try to use the same argument as in Claim \ref{clm:dcs_planted_good} and attempt to derive a tightened version of Claim \ref{clm:dcs_unplanted_bad} to show that both
\begin{enumerate}
    \item In the planted case, picking $U$ plus the vertices of $V$ from the innermost planted component of $G$ certifies that $\score(\opt) = \Omega(n^{-\epsilon'})$ for some $\epsilon' = \Theta(\epsilon)$ and
    \item In the unplanted case, the best we can do is pick $U$ plus an arbitrary size-$|U|$ subset of $V$, so $\score(\opt) = O(n^{-1/4 + 1/2^{r}})$.
\end{enumerate}

Unfortunately, the second of these is false, as is exhibited by the trivial solution selecting the endpoints of one edge from each of the two frames, which gets a score of $\Omega(1)$. In particular, we do not have a good lower bound on $s$, so we cannot usefully apply the Chernoff bounds. Additionally, this shows that the $\Omega(n^{-\epsilon'})$ lower bound is both trivial and unhelpful. Since an $\Omega(1/T)$ algorithm is trivial, we need our Yes instances to have a score of at least $\Omega(n^{1/4-\epsilon'}/T)$ if we ever want to achieve the sought bounds. As we now show, one way to do this is by increasing $T$.

Our construction will be exactly as above, except we pad the graph sequence with an additional $n^2$ different i.i.d. $G(n, n^{-3\epsilon'})$ random graphs on vertex set $V$. For any fixed $O(n^{\epsilon'})$-sized subset of $V$, the probability that it induces an edge in one of the subgraphs is $O(n^{-\epsilon})$, and the probability that it induces an edge in \textit{all} of them is $O(n^{-n^2\epsilon})$. Because there are only $O(n^{n^\epsilon})$ many such subsets, the union bound ensures that with high probability every vertex subset of size $O(n^{\epsilon})$ misses an edge in at least  one graph in the sequence, and thus has score $0$. Therefore, we know that $\opt > n^{\epsilon'}$, which allows us to use Chernoff bounds as in Claim \ref{clm:dcs_unplanted_bad}. Additionally, also by Chernoff bounds, solutions of size $\Omega(n^{4\epsilon'})$ have their objective scores simply scaled down by a factor of $n^{2\epsilon'}$ (up to subconstant factors and w.h.p.), so the relative value of all large solutions remains unchanged.

Thus, with this additional change, both arguments (a) and (b) above hold (up to factors of $n^{\Theta(\epsilon')}$), and we establish a gap of $n^{1/4 - 1/2^{r+1} - \Theta(\epsilon')}$. Rewriting the exponent and for a suitable choice of $r$ (dependent on $\epsilon$), our gap can be set to equal $n^{1/4 - \epsilon}$.
\end{proof}

\section{\texorpdfstring{\dksf}{Densest k-Subgraph} Hardness for \texorpdfstring{\kma}{Densest k-Common Subgraph}}
\label{sec:kma}

In \cite{semertzidis2016best}, Semertzidis et al. also study a generalization of \ma in which the score of a solution $S \subset V$ is the average degree of the $k$th-densest subgraph induced by $S$ in the sequence. \ma is simply the restriction of this problem to $k = T$. We now argue that this problem, \kma, has hardness related to that of \dksf (\dks). Namely, we show that if \kma has an $f(n)$-approximation algorithm, then \dks can be approximated to within a factor of $O(f(n)^2)$.

\begin{theorem}
If \kma has an $f(n)$-approximation algorithm, then \dks has an $O(f(n)^2)$-approximation algorithm.
\label{thm:dkshard}
\end{theorem}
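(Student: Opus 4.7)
My plan is to reduce a \dks instance $(G = (V,E), k')$ to \kma on the same vertex set by introducing one frame per edge: for each $e \in E$, let $G_e$ be the graph containing only the edge $e$. The key observation is that frame $G_e$ contributes induced density $1/|S|$ to the density list if both endpoints of $e$ lie in $S$, and $0$ otherwise; hence the \kma score of $S$ at parameter $k$ equals $1/|S|$ when $|E(G[S])| \geq k$, and $0$ otherwise. This cleanly decouples the two quantities of interest: whether enough edges of $G$ are captured, and how small $|S|$ is.

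The overall algorithm guesses $m^* \eqdef \OPT_{\dks}(G,k')$ by enumerating integer values $m \in \{1, \ldots, |E|\}$ and invoking the assumed $f(n)$-approximation to \kma with parameter $k=m$. For the correct guess $m = m^*$, a \dks optimum $S^*$ has $|E(G[S^*])| = m^*$ and $|S^*| = k'$, so it is feasible in the \kma instance with score $1/k'$; hence \kma OPT is at least $1/k'$, and the returned set $S$ has \kma score at least $1/(k' f(n))$. Since this score is positive, it must equal $1/|S|$, simultaneously forcing $|E(G[S])| \geq m^*$ and $|S| \leq k' f(n)$. I would then post-process $S$ into a size-$k'$ set $T$: if $|S| \leq k'$, pad $S$ arbitrarily to size $k'$, which preserves all edges and gives an exact \dks solution; otherwise, a uniformly random size-$k'$ subset $T \subseteq S$ has $|E(G[S])| \cdot k'(k'-1)/(|S|(|S|-1)) = \Omega(m^*/f(n)^2)$ induced edges in expectation, so some such $T$ attains this bound deterministically (via the method of conditional expectations). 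Returning the best $T$ across all guesses yields an $O(f(n)^2)$-approximation for \dks.

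The main obstacle I anticipate is identifying the right frame gadget: the single-edge-per-frame construction is what makes the \kma score take only two possible values and enables simultaneous control over ``edges captured'' and ``set size'' from one \kma call. Once the construction is in place, the remaining ingredients are standard, relying only on the monotonicity of $\OPT_{\dks}$ in its size parameter (so the bound $|S| \leq k' f(n)$ still holds when the \kma optimum is attained by a smaller set) and a textbook random-subset averaging argument.
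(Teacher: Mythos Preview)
Your argument is correct. The single-edge-per-frame construction does exactly what you claim: the \kma score of $S$ at parameter $k$ is $1/|S|$ if $|E(G[S])|\ge k$ and $0$ otherwise, so an $f(n)$-approximation to the \kma instance with $k=m^*$ produces a set $S$ with $|E(G[S])|\ge m^*$ and $|S|\le k'f(n)$, after which the random-subset averaging yields a size-$k'$ set with $\Omega(m^*/f(n)^2)$ edges. The enumeration over $m$ and the padding case are handled correctly, and the number of vertices is unchanged by the reduction, so the approximation factor $f(n)$ transfers without loss.

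Your route, however, differs from the paper's. The paper does not reduce directly from \dks; it invokes an existing result that $f$-hardness for \dks implies $O(\sqrt{f})$-hardness for $k$-\minrep (the variant of \minrep where only $k$ superedges need to be covered), and then reuses the \minrep $\to$ \ma reduction of Theorem~\ref{thm:maminrep} verbatim, with the parameter $k$ carried through. Your construction can be viewed as the special case of that reduction in which every superedge contains exactly one edge, which lets you drop the auxiliary $(u,v)$ frame and bypass the $k$-\minrep intermediary entirely. The payoff is a self-contained argument that does not cite an external black box; the paper's version, by contrast, inherits whatever generality or tightness the $k$-\minrep result already provides and keeps the \ma and \kma reductions syntactically identical.
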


\begin{proof}
In \cite{charikar2017label}, the authors show a similar \dks hardness for $k$-\minrep, the generalization of \minrep in which feasible solutions need to cover only $k$ total superedges (as opposed to all of them). Given an instance of $k$-\minrep, we produce a \kma instance using the same reduction as in the proof of Theorem \ref{thm:maminrep}. The choice of $k$ remains the same between the two problems. Using the same analysis as before, we get that the \kma problem is as hard to approximate as $k$-\minrep (up to constant factors). The conclusion follows.
\end{proof}

\section{Common Spanning Subgraphs}
\label{sec:mcss}

In this section, we consider the natural extension of the \textsc{Minimum Spanning Tree (MST)} problem to sequences of graphs. Just as \textsc{MST} is often motivated by the design of communication networks, one can imagine that the network links are known to change over a set of discrete times, and the goal is to purchase a minimal set of links to ensure that every node is always connected to the rest. We note that a similar but distinct problem, ``Minimum Temporally Connected Subgraphs'', was studied recently by Axiotis and Fotakis \cite{axiotis2016size}.

\begin{definition}[\mcssf (\mcss)]
Given a sequence of connected graphs $G_1=(V,E_1), \ldots, G_T=(V,E_T)$, find a minimum-size set of edges $E^* \subseteq \bigcup_t E_t$ that induces a spanning subgraph in every frame.
\end{definition}

\subsection{Approximating \mcss}
\label{sec:approx_mcss}

Here we consider a natural greedy algorithm for \mcss. Suppose we are given an instance $G_1=(V,E_1), \ldots, G_T=(V,E_T)$, with $E = \bigcup_t E_t$. We build up a solution, starting with empty graphs $H_1=(V,\emptyset),\ldots,H_T=(V,\emptyset)$. While the total number of connected components in all these graphs is greater than $T+n$, pick the edge $e \in E$ that reduces the total number of connected components by the greatest amount. Finally, once there are only $T+n$ connected components in total, add any $n$ edges that bring the number of connected components down to $T$.

\begin{theorem}
The above greedy procedure is a $O(\log T)$-approximation algorithm for \mcss.
\end{theorem}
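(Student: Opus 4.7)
The plan is to mimic the classical greedy set-cover analysis, tracking a single potential function $\Phi \eqdef \sum_{t=1}^T c(H_t)$, where $c(H_t)$ is the number of connected components of the working graph $H_t$. Initially $\Phi = Tn$, and the target state (each $H_t$ connected) has $\Phi = T$, so the initial ``deficit'' $\Phi - T$ equals $T(n-1)$. Adding a single edge $e \in \bigcup_t E_t$ to the working sequence (i.e., to every $H_t$ with $e \in E_t$) can only decrease $\Phi$, and by at most one per frame containing $e$.

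First I would establish the key covering inequality: at any intermediate state with deficit $\Phi - T$, there exists an edge $e$ whose insertion decreases $\Phi$ by at least $(\Phi - T)/|E^*|$, where $E^*$ is an optimal solution. Indeed, since $E^* \cap E_t$ alone already connects $G_t$, the cumulative effect of inserting all $|E^*|$ edges of $E^*$ on top of the current state is to drive $\Phi$ down to $T$, a total drop of exactly $\Phi - T$. Averaging the per-edge drop over $E^*$ shows that some single edge in $E^*$ accounts for at least $(\Phi - T)/|E^*|$ of it, and the greedy rule picks a globally best edge. Standard geometric-shrinkage then yields deficit at most $T(n-1) \cdot e^{-k/|E^*|}$ after $k$ greedy steps, so choosing $k = \lceil |E^*| \ln T \rceil$ brings the deficit below $n$.

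At that point the algorithm switches into its ``finishing'' phase. Whenever some frame has $c(H_t) > 1$, connectivity of $G_t$ guarantees an edge of $E_t$ that joins two current components, decreasing $\Phi$ by at least one; iterating, at most $n$ further edges suffice to drive $\Phi$ down to $T$. In total, the greedy uses at most $|E^*| \ln T + n$ edges, which is $O(|E^*| \log T)$ because any one frame alone forces $|E^*| \geq n - 1$.

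The main obstacle I anticipate is not the covering inequality itself---which becomes routine once $\Phi$ is identified as the right potential---but rather justifying the \emph{two-phase} structure of the algorithm. If one simply ran the greedy until $\Phi = T$, the same analysis would only give $O(|E^*| \log(Tn))$ edges, which is $O(|E^*| \log T)$ only when $n = \poly(T)$. The trick is to recognize that the ``long tail'' of greedy steps, once the deficit is already $O(n)$, can be replaced by the trivial bound of $n$ useful edges (each guaranteed to exist by the connectivity of the $G_t$'s). Making these two phases dovetail cleanly, together with the averaging argument, is what delivers the $O(\log T)$ approximation ratio.
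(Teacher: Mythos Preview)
Your proposal is correct and follows essentially the same argument as the paper: the paper uses the identical potential $\rho_i = \sum_t c(H_t) - T$ (your $\Phi - T$), the same averaging-over-$E^*$ inequality to get geometric shrinkage, the same cutoff at deficit below $n$ after $\opt \ln T$ greedy steps, and the same bound $\opt \ge n-1$ to absorb the final $n$ edges. Your additional remark that running greedy to completion would only give $O(\log(Tn))$, thereby motivating the two-phase structure, is a nice clarification not made explicit in the paper.
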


\begin{proof}
We analyze the algorithm in terms of a \emph{potential function} $\rho_i$. On the $i$-th iteration---that is, the $i$-th edge picked---the potential is defined as
\[
    \rho_i = \sum_{t=1}^T (\text{number of connected components in } H_t) - T
\]
Initially $\rho_0 = nT - T$, since every vertex is a singleton. Once the potential is reduced to at most $n$, there are at most $T+n$ connected components and the while-loop ends.

Consider some $i$-th iteration. Since adding the optimal solution would drop the potential from $\rho_i$ down to $0$, there exists an edge whose addition would decrease the potential by at least $\rho_i/\opt$. After adding the greedily-chosen edge, which is at least as good, we have a potential of
\[
    \rho_{i+1} \leq \rho_i - \rho_i/\opt = \rho_i(1 - 1/\opt).
\]
Consequently after $j$ iterations we have
\[
    \rho_j \leq \rho_0 \left( 1 - \frac{1}{\opt} \right)^j \leq (nT-T) e^{-j/\opt}
\]
A choice of $j = \ln(T) \cdot \opt$ iterations suffices to achieve $\rho_j < n$. Finally, in its last step, the algorithm adds at most $n$ edges to reduce the number of connected components down to $T$, yielding a feasible solution. But since any feasible solution must span at least one frame, $\opt \geq n-1$; therefore our solution has size at most $\ln(T) \cdot \opt + \opt + 1 = O(\log T) \cdot \opt$.
\end{proof}

\subsection{Hardness of Approximating \mcss}
\label{sec:hardness_mcss}

\begin{theorem}
\mcss is \np-hard to approximate to within a factor of $2-\epsilon$ for every $\epsilon>0$.
\label{thm:mcss_hardness}
\end{theorem}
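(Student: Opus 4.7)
My plan is to prove $(2-\epsilon)$-inapproximability of \mcss via a gap-preserving reduction from a suitable \np-hard decision problem. Because every feasible solution must contain a spanning tree of each frame, $\opt \geq |V|-1$ always; meanwhile, spanning each frame independently yields $\opt \leq T(|V|-1)$. The natural target is a gap of ``a single spanning tree suffices'' (YES) versus ``two essentially disjoint spanning trees are required'' (NO), giving exactly the desired factor-$2$ hardness. A natural source is therefore an \np-hard common-base problem for multiple graphic matroids, or a combinatorial equivalent such as \textsc{3-Dimensional Matching} or a tailored 3-SAT formulation, whose YES instances encode a globally consistent choice and whose NO instances rule one out.

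For the construction, I would produce a graph sequence $(G_1, \ldots, G_T)$ over a vertex set $V^*$ in which every frame contains a common ``core'' spanning structure of size $|V^*|-1$, plus a small per-frame gadget encoding one constraint of the source. The gadget offers an alternative to one core edge, activated only via a ``choice edge'' that must be used consistently across all frames requiring it. A YES certificate then produces a globally consistent set of substitutions, keeping $|E^*|$ close to $|V^*|-1$. A NO instance admits no such consistency: any attempt to drop a core edge disconnects some other frame and forces a separate repair, so $E^*$ must contain nearly all core edges plus nearly one patch edge per frame, approaching $2(|V^*|-1)$ edges.

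The soundness argument is the crux: given any $E^*$ of size less than $(2-\epsilon)(|V^*|-1)$, one must extract a YES certificate by identifying, per frame, the choice edge it depends on and arguing that these choices are mutually consistent. Completeness is the easy direction---exhibit an $E^*$ realizing the target bound from a YES witness. The main obstacle is ensuring the blow-up in NO instances is genuinely close to $2$, since naive gadgets typically give only $(1+\delta)$-hardness for small $\delta$. To push to $(2-\epsilon)$, each patch edge must help at most one frame---so that saving one core edge costs at least one patch per affected frame---and the core should be padded with replicated edges so that additive slack terms become negligible relative to $|V^*|$, making the ratio arbitrarily close to $2$ for every $\epsilon>0$.
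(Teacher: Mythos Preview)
Your proposal is a plan rather than a proof, and it has a genuine gap at the point you yourself flag as ``the crux.'' You propose reducing from a \emph{decision} problem (3-SAT, 3-DM, or a common-base matroid problem) and then assert that in NO instances any feasible $E^*$ must approach $2(|V^*|-1)$ edges. But nothing in your construction enforces this. A NO instance of 3-SAT may be nearly satisfiable: a single assignment could satisfy all but one clause, in which case your ``globally consistent'' choice edges handle all but one frame and only a constant number of patches are needed. The same issue arises for 3-DM. Your padding suggestion (``replicate core edges so additive slack becomes negligible'') does not fix this, because the slack you need to kill is in the \emph{numerator of the NO bound}, not an additive constant you control. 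There is also a more immediate issue with the construction as described: if every frame literally contains a common spanning structure of size $|V^*|-1$, then that structure is itself a feasible \mcss solution of size $|V^*|-1$ regardless of YES/NO, collapsing the gap entirely. You need the core to span all vertices except a designated gadget vertex whose attachment varies per frame.

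The paper's proof avoids both problems by reducing from a source problem that already carries a large \emph{multiplicative} gap, namely \ekvcf, via a generic \setcover-to-\mcss reduction. The construction has vertices $\{s_1,\ldots,s_m,x,y\}$; one frame forces the path $x,y,s_1,\ldots,s_m$ into any solution, and one frame per universe element forces $x$ to be attached via some edge $(s_j,x)$ with $S_j$ covering that element. Thus $\opt = m + c + 1$ where $c$ is the minimum set cover. Plugging in the known \ekvc gap (YES: cover of size $O(n/k)$; NO: cover of size $(1-\varepsilon)n$, with $m=n$) gives an \mcss gap of $\frac{(2-\varepsilon)n+1}{(1+O(1/k)+\varepsilon)n+1}$, which exceeds $2-\epsilon$ for suitable $k,\varepsilon$. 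The key idea you are missing is that the factor-$2$ comes from the \emph{source} problem's gap, not from the gadgetry; once you see this, the reduction is short and the soundness is immediate.
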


To prove this, we first give a generic reduction from \setcover, which shows a gap hardness in terms of certain parameters. Later we show how reducing from a special case of \setcover yields parameters that give the desired gap.

\begin{lemma}
Given a \setcover instance on $m$ subsets, we can construct in polynomial time an \mcss instance that has a solution of size $m+c+1$ iff the constructed \setcover instance has a solution of size $c$.
\label{lem:setcover_to_MCSS}
\end{lemma}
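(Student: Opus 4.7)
The plan is to build an MCSS instance whose optimum equals exactly $m + 1$ plus the Set Cover optimum, which immediately yields the claimed biconditional. The construction uses vertex set $V = \{r, s, w_1, \ldots, w_m\}$ (so $|V| = m+2$), with $w_j$ representing the $j$-th subset, together with $n + 1$ frames. The base frame $G_0$ is the star centered at $s$ on $V$, containing the edges $(r, s)$ and $(s, w_j)$ for every $j$. For each element $u_i$, the element frame $G_i$ contains the spokes $(s, w_j)$ for every $j$ together with the membership edges $(r, w_j)$ for every $j$ with $u_i \in S_j$. Each $G_i$ is connected as long as every $u_i$ lies in some subset, which we may assume without loss of generality (otherwise the Set Cover instance is trivially infeasible).

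For the forward direction, given a set cover $C$ with $|C| = c$, I take $E^{\ast} \eqdef E_0 \cup \{(r, w_j) : j \in C\}$, of size exactly $m + 1 + c$. This spans $G_0$ since it contains all of $E_0$, and spans each $G_i$ because the spokes $(s, w_j)$ connect $s$ to every $w_j$ and the cover edge $(r, w_{j^{\ast}})$, for some $j^{\ast} \in C$ with $u_i \in S_{j^{\ast}}$, attaches $r$.

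For the backward direction I use two structural observations. First, $G_0$ is itself a spanning tree of $V$, so the only subset of $E_0$ that spans $V$ is $E_0$ itself; hence $E_0 \subseteq E^{\ast}$, contributing $m+1$ forced edges. Second, in each element frame $G_i$ the edge $(r, s)$ is absent, so the only way to connect $r$ through $E^{\ast} \cap E_i$ is via some $(r, w_j)$ with $u_i \in S_j$; consequently the index set $W \eqdef \{j : (r, w_j) \in E^{\ast}\}$ forms a valid set cover, forcing $|W| \geq c^{\ast}$ where $c^{\ast}$ is the Set Cover optimum. Combining, $|E^{\ast}| \geq m + 1 + c^{\ast}$, with equality realized by the construction in the forward direction.

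The main obstacle is arranging matters so that the $m + 1$ edges forced by $G_0$ do not inadvertently satisfy the covering requirement inside the element frames. The construction resolves this by having the spoke edges $(s, w_j)$ appear in every frame (so they carry structural weight in both $G_0$ and each $G_i$) while restricting the edge $(r, s)$ to $G_0$ alone, so that $(r, s)$ is forced into $E^{\ast}$ yet cannot help attach $r$ in any element frame and the cover edges therefore remain genuinely necessary. Verifying this separation, together with the trivial connectedness check on the frames, is the only delicate part of the argument.
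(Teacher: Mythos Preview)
Your construction and argument are correct and follow essentially the same approach as the paper. The only difference is cosmetic: the paper's base frame $G_0$ is the \emph{path} $x,y,s_1,\ldots,s_m$ (with the sub-path $y,s_1,\ldots,s_m$ reused in each element frame), whereas you use a \emph{star} centered at $s$; in both cases $G_0$ is a spanning tree on $m+2$ vertices, forcing all $m+1$ of its edges into any feasible solution, and the special vertex ($x$, respectively $r$) is attached in element frames only through set-membership edges, so the remaining edges encode a set cover exactly as you argue.
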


\begin{proof} 
Given a set system $\mathcal{S} = \{S_1,\ldots,S_m\}$ over elements $U = \{x_1,\ldots,x_n\}$, we create the following sequence of unweighted graphs $G_0=(V,E_0), G_1=(V,E_1), \ldots, G_n=(V,E_n)$:

\begin{itemize}
    \item Every graph has the same vertex set $V = \{s_1,\ldots,s_m,x,y\}$.
    \item $E_0$ forms the path $x,y,s_1,\ldots,s_m$.
    \item For each $1 \leq i \leq n$, $E_i$ contains the edges along the path $y,s_1,\ldots,s_m$, and, for each set $S_j \ni x_i$, the edge $(s_j,x)$.
\end{itemize}

Suppose there is a \setcover solution $\mathcal{T} \subseteq \mathcal{S}$ of size $c$. Consider the \mcss solution $F \subseteq E$ consisting of:
\begin{itemize}
    \item The edges along the path $x,y,s_1,\ldots,s_m$.
    \item For each $S_j \in \mathcal{T}$, the edge $(s_j,x)$.
\end{itemize}
This is precisely $m+1+c$ edges. To see that $F$ is a valid \mcss solution, observe that:
\begin{itemize}
    \item All of $E_0$ is picked, so $G_0[F]$ is connected.
    \item For $1 \leq i \leq n$, the edges along the path $y,s_1,\ldots,s_m$ are in $F$. Additionally, by virtue of being a set cover, there is a set $S_j \in \mathcal{T}$ that contains $x_i$; therefore $F$ contains the edge $(s_j,x)$, ensuring that $x$ is connected to the rest of the frame.
\end{itemize}

Conversely, suppose there is an \mcss solution $F$ of size $m+c+1$. First, observe that $F$ must contain the $m+1$ edges along the path $x,y,s_1,\ldots,s_m$, for otherwise $G_0[F]$ would not be connected. Hence $F$ has exactly $c$ edges outside of this path. The only such edges are of the form $(s_j,x)$. Consider the \setcover solution $\mathcal{T} = \{S_j : (s_j,x) \in F\}$. Clearly this is of size $c$. And it is a valid solution, since we pick for each element $x_i$ at least one set $S_j \ni x_i$, which corresponds to a vertex $s_j$ with an edge to $x$ in $G_i$.
\end{proof}

Now, using the above construction, we reduce from the following special case of \setcover:

\begin{definition}[\ekvcf (\ekvc)]
Given a $k$-uniform hypergraph, pick a minimum-size set of vertices so that every hyperedge has an endpoint in the set.
\end{definition}

Variants of the following result are proven in several papers; see for example~\cite{dinur2005new}.

\begin{theorem}
For every $k \geq 2, \varepsilon>0$ it is \np-hard to distinguish, given an instance of \ekvc on an $n$-vertex, $k$-uniform hypergraph, the following cases:
\begin{itemize}
    \item (YES) There is a vertex cover of size at most $\left( O(\frac{1}{k}) + \varepsilon \right)n$.
    \item (NO) Every vertex cover has size at least $(1 - \varepsilon)n$.
\end{itemize}
\end{theorem}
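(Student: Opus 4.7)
The plan is to follow the long-code / label-cover template that underlies all known hypergraph vertex cover hardness results (this is the approach of Dinur-Guruswami-Khot-Regev and Dinur-Safra; for a cleaner version in the $1/k$-regime, a $p$-biased variant). First, I would start from a bipartite \labelcover instance $(U,V,E,\Sigma_U,\Sigma_V,\{\pi_e\}_{e\in E})$ obtained by parallel repetition of the PCP theorem: for arbitrarily small $\eta>0$ it is \np-hard to distinguish the case that some labeling satisfies every edge from the case that every labeling satisfies at most an $\eta$-fraction of edges. The label alphabets $|\Sigma_V|,|\Sigma_U|$ will be constants depending on $\eta$ and $k$.

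Next I would construct the $k$-uniform hypergraph $H$ whose vertex set is the ``long-code block'' $\bigsqcup_{v\in V}\{v\}\times\{0,1\}^{\Sigma_V}$, with every vertex $(v,x)$ weighted by the $p$-biased product measure $\mu_p(x)=p^{|x|}(1-p)^{|\Sigma_V|-|x|}$ for $p=1/(k-1)$ (any $p=\Theta(1/k)$ works). Hyperedges are produced by a local ``at-least-one-is-$1$'' test: for each Label-Cover edge $e=(u,v)$ with projection $\pi_e$, sample $x^{(1)},\dots,x^{(k)}\in\{0,1\}^{\Sigma_V}$ coordinate-wise i.i.d.\ from the distribution that draws each $(x^{(1)}_i,\dots,x^{(k)}_i)\in\{0,1\}^k$ from $\mu_p^{\otimes k}$ conditioned on the bits not being all-zero, together with the standard folding across the projection $\pi_e$ on the $u$-side (so that the hyperedge is forced to respect the constraint). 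A standard unweighting step (duplicating vertices proportionally to $\mu_p$, followed by a Chernoff argument to remove vanishingly small mass) converts the weighted instance to an unweighted $k$-uniform hypergraph of size polynomial in $n,1/\varepsilon,1/\eta$.

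For completeness, given a labeling $\sigma:V\to\Sigma_V$ satisfying every Label-Cover edge, take the cover $C=\{(v,x):x_{\sigma(v)}=1\}$. Because every sampled $k$-tuple has at least one $x^{(j)}_{\sigma(v)}=1$ (after undoing the projection using $\sigma(u)=\pi_e(\sigma(v))$), every hyperedge is covered. The $\mu_p$-measure of $C$ in each block is exactly $p$, so $|C|\le(p+\varepsilon)n=(O(1/k)+\varepsilon)n$, giving the YES case. For soundness, suppose $H$ has a vertex cover of weight less than $(1-\varepsilon)n$; then for a $\Omega(\varepsilon)$-fraction of blocks $v\in V$, the complement $A_v$ of the cover has $\mu_p(A_v)\ge\varepsilon/2$. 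The event that a hyperedge is \emph{not} covered is exactly the event that every one of the $k$ sampled points lies in the complement, i.e.\ $x^{(1)},\dots,x^{(k)}\in A_v$. Using Bourgain's sharp-threshold / junta theorem (or equivalently the Kindler-Safra / Friedgut theorem) on the $p$-biased cube, one extracts from each such $A_v$ a bounded list of ``influential'' coordinates, because otherwise the probability that all $k$ conditioned i.i.d.\ samples land in $A_v$ falls below any constant by hypercontractivity on the $p$-biased cube. Decoding $\sigma(v)$ uniformly from this list (and $\sigma(u)$ correspondingly via the projection) yields a Label-Cover assignment satisfying an $\Omega_{k,\varepsilon}(1)$-fraction of edges, contradicting the choice $\eta\ll_{k,\varepsilon}1$.

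The main obstacle I anticipate is the soundness analysis on the $p$-biased cube with $p=\Theta(1/k)$: classical hypercontractivity degrades as $p\to0$, so one has to use the $p$-biased Bourgain / Kindler-Safra theorem (or a direct combinatorial influence-decoding as in Dinur-Guruswami-Khot-Regev) to guarantee that low-weight sets with high ``intersecting-$k$-tuple'' probability must be juntas. Everything else — completeness, the unweighting reduction, and the decoding back to \labelcover — is routine once that piece is in hand.
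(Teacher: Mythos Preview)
The paper does not prove this theorem at all: it is quoted as a known result from the literature, with the line ``Variants of the following result are proven in several papers; see for example~\cite{dinur2005new},'' and is then used as a black box in the reduction to \mcss. Your sketch is a faithful outline of exactly the long-code / \labelcover machinery developed in that cited line of work (Dinur--Safra, Dinur--Guruswami--Khot--Regev), so in that sense you are reproducing the external proof rather than something the paper itself supplies. For the purposes of this paper no proof is needed here; if you do want to include one, be aware that the soundness step you flag --- extracting a junta from a set of noticeable $p$-biased measure whose $k$-wise conditioned-nonzero intersection probability is large --- is indeed the only nontrivial ingredient, and the cleanest way to handle it for $p=\Theta(1/k)$ is to invoke the Dinur--Guruswami--Khot--Regev analysis directly rather than reproving a $p$-biased Bourgain-type statement from scratch.
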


We can now easily show the desired gap hardness for \mcss.

\begin{proof}[Proof of Theorem \ref{thm:mcss_hardness}]
We reduce from \ekvc. Suppose we are given a hypergraph $\mathcal{G}=(\mathcal{V},\mathcal{E})$ where $|\mathcal{V}|=n$. Since \ekvc is a special case of \setcover, we can perform the reduction described in Lemma \ref{lem:setcover_to_MCSS}. Naturally, for each vertex there is a subset consisting of its incident edges, and the universe of elements is $\mathcal{E}$.

Combining the gap hardness of \ekvc with Lemma \ref{lem:setcover_to_MCSS} shows that it is \np-hard to distinguish:
\begin{itemize}
    \item (YES) There is a solution of size at most $n + \left( O(\frac{1}{k}) + \varepsilon \right)n + 1$.
    \item (NO) Every solution has size at least $n + (1-\varepsilon)n + 1$.
\end{itemize}

This gives a gap of
\[
    \dfrac{\text{NO}}{\text{YES}} \geq \dfrac{(2-\varepsilon)n + 1}{(1 + O(1/k) +\varepsilon)n + 1} \geq 2 - \epsilon
\]
for an appropriate choice of $\varepsilon, k$ in terms of $\epsilon$.
\end{proof}

\end{document}